\documentclass[11pt]{article}
\usepackage[T1]{fontenc}
\usepackage{lmodern}
\usepackage[margin=1in]{geometry}
\usepackage{amsmath,amssymb,amsthm,mathtools}
\usepackage{microtype}
\usepackage{tikz}
\usetikzlibrary{arrows.meta}
\definecolor{resultblue}{RGB}{35,83,144}
\usepackage[colorlinks=true,linkcolor=blue,citecolor=blue,urlcolor=blue]{hyperref}
\hypersetup{pdftitle={EFI iff 1PRS},pdfauthor={Rishabh Batra}}
\usepackage[nameinlink,noabbrev]{cleveref}
\numberwithin{equation}{section}
\newtheorem{theorem}{Theorem}[section]
\newtheorem{lemma}[theorem]{Lemma}

\newtheorem{corollary}[theorem]{Corollary}
\newtheorem{fact}[theorem]{Fact}
\theoremstyle{definition}
\newtheorem{definition}[theorem]{Definition}
\theoremstyle{remark}

\newcommand{\EFI}{\ensuremath{\mathsf{EFI}}}
\newcommand{\PRS}{\ensuremath{\mathsf{1PRS}}}
\newcommand{\QPT}{\mathsf{QPT}}
\newcommand{\Tr}{\operatorname{Tr}}
\newcommand{\negl}{\operatorname{negl}}
\newcommand{\MM}[1]{\mathsf{MM}_{#1}}
\newcommand{\adv}{\operatorname{Adv}}
\newcommand{\capprox}{\mathrel{\approx_c}}
\newcommand{\sapprox}{\mathrel{\approx_s}}
\newcommand{\ket}[1]{\lvert #1\rangle}
\newcommand{\bra}[1]{\langle #1\rvert}
\newcommand{\poly}{\operatorname{poly}}
\title{$\EFI\Leftrightarrow\PRS$}
\author{Rishabh Batra\\EPFL}
\date{}
\begin{document}
\maketitle

\begin{abstract}
\noindent
A series of recent works has established EFI pairs as the ``minimal'' assumption
for quantum cryptography. Recently, Cavalar et al.~\cite{ccc} showed that EFI pairs are
equivalent to non-uniform expanding single-copy pseudorandom state generators
($\PRS$), and obtained a non-uniform meta-complexity characterization of EFI
pairs. We remove the non-uniformity in their construction and prove
$\EFI\Longleftrightarrow\PRS$. As a corollary, we obtain the first uniform
meta-complexity characterization of EFI pairs. Together with the equivalence
of statistically verifiable one-way state generators and commitments proved
by Batra and Jain~\cite{bj}, this gives a quantum analogue of the classical equivalence between
one-way functions, commitments, and pseudorandom generators.
\end{abstract}

\section{Introduction}

One-way functions are a central foundation of classical computational
cryptography: they are easy to evaluate but hard to invert, and their existence
is equivalent to that of pseudorandom generators and computationally secure
commitments~\cite{hill,naor,il}. Another equivalent primitive is an
\emph{EFID pair}~\cite{gol90}: a pair of efficiently samplable classical
distributions that are statistically far but computationally
indistinguishable. Thus one-way functions, EFID pairs, and pseudorandom
generators are equivalent existence assumptions~\cite{gol01}.
These equivalences connect inversion, pseudorandomness, and hiding, and
organize a large part of classical cryptography around a single assumption.

In the quantum world, the picture is less straightforward. Pseudorandom
state generators (PRSGs)~\cite{jls} generalize pseudorandom generators to
quantum outputs. Morimae and Yamakawa~\cite{my} showed that they suffice
for quantum commitments. Further cryptographic applications of
pseudorandom quantum states were obtained by Ananth, Qian, and
Yuen~\cite{aqy}.
It is known that PRSGs can exist even if $\mathsf{BQP}=\mathsf{QMA}$
(relative to a quantum oracle)~\cite{kre}, and that single-copy PRSGs can
exist even if $\mathsf P=\mathsf{NP}$ (relative to a classical
oracle)~\cite{kqst}. Both settings rule out quantum-secure classical
one-way functions relative to the corresponding oracles. These
separations motivate the possibility of doing quantum cryptography under
assumptions weaker than OWFs.

EFI pairs, introduced by Brakerski, Canetti, and Qian~\cite{bcq}, are the
quantum analogue of EFID pairs: they consist of efficiently generated
quantum states that are statistically far but computationally
indistinguishable. They are equivalent to quantum commitments and are
widely regarded as a candidate minimal assumption for quantum
cryptography (see~\cite{microcrypt}).

Morimae and Yamakawa~\cite{my} introduced one-way state generators
(OWSGs), quantum analogues of OWFs with quantum outputs. Batra and
Jain~\cite{bj} showed that EFI pairs are equivalent to a version of OWSGs
whose verifier may be inefficient, called statistically verifiable
one-way state generators ($\mathsf{svOWSG}$s).

In the single-copy setting, a short classical key specifies a pure state
whose average is computationally indistinguishable from the maximally
mixed state. We use $\PRS$ to mean an \emph{expanding} single-copy
pseudorandom state generator: its output has more qubits than its key has
bits. Without expansion, the single-copy condition can be satisfied
trivially: mapping a uniformly random $n$-bit key $k$ to the basis state
$\ket k$ gives an average state exactly equal to $I/2^n$.
Cavalar et al.~\cite{ccc} proved
$\EFI\Longleftrightarrow\mathsf{nu\text{-}1PRS}$, where the generator
receives $O(\log\lambda)$ bits of classical advice specifying an
extraction length. Removing this advice was a major question left open.

\subsection{Our results}

Our main result resolves this question.

\begin{theorem}[Informal]
EFI pairs exist if and only if expanding single-copy pseudorandom state
generators exist.
\end{theorem}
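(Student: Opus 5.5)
The easy direction, $\PRS\Rightarrow\EFI$, is standard and I would do it first. Let $G$ map a key $k\in\{0,1\}^{n}$ to an $m$-qubit pure state $\ket{\psi_k}$ with $m>n$. Take $\rho_0=\mathbb{E}_k\ket{\psi_k}\bra{\psi_k}$ and $\rho_1=I/2^m$.

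\begin{itemize}
\item \emph{Efficient generation.} Both states are efficiently preparable.
\item \emph{Computational indistinguishability.} This is exactly the single-copy pseudorandomness of $G$.
\item \emph{Statistical farness.} $\rho_0$ has rank at most $2^n\le 2^{m-1}$, so the projector onto its support $\Pi$ has $\Tr(\Pi\rho_1)\le 1/2$ while $\Tr(\Pi\rho_0)=1$. Hence the trace distance is at least $1/2$.
\item \emph{Amplification.} Standard parallel repetition of EFI pairs pushes the distance to $1-\negl$.
\end{itemize}

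So expansion by even one qubit already gives an EFI pair.

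For $\EFI\Rightarrow\PRS$ I would follow Cavalar et al.~\cite{ccc} and then remove the advice. Their non-uniform construction proceeds in four steps.

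\begin{enumerate}
\item From EFI, obtain a state-generation primitive with a pseudoentropy gap: a mixed state $\sigma$ that is computationally indistinguishable from a state of noticeably higher min-entropy than it actually has. This is the quantum analogue of HILL pseudoentropy, obtained from commitments or svOWSGs~\cite{bj}.
\item Purify $\sigma$ with a short classical key, so that the state is pure given the key.
\item Apply a quantum extractor (e.g.\ random Pauli or 2-design twirling seeded by part of the key) to extract $\ell$ qubits, where $\ell$ is sandwiched between the true and the pseudo min-entropy.
\item The resulting average output is then close to $I/2^\ell$ computationally, while the key is shorter than $\ell$.
\end{enumerate}

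The advice is exactly the value of $\ell$, i.e.\ a good estimate of the entropy. This estimate is $O(\log\lambda)$ bits because the entropy is a polynomially bounded quantity whose value need only be known up to the gap.

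To remove the advice I would use the classical trick from HILL/Haitner--Reingold--Vadhan of \emph{combining over all guesses}. Let $t=\poly(\lambda)$ candidate values $\ell_1,\dots,\ell_t$ be spaced by less than the entropy gap, and build the $\PRS$ as the tensor product over many independent blocks. Block $j$ uses a random shift, or equivalently every guess appears with a cyclically rotated offset, so that the total extracted length telescopes.

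Concretely, the plan is as follows.

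\begin{itemize}
\item First amplify the gap by repetition, taking $q$ parallel copies, so the pseudo-entropy minus the real entropy becomes $\Omega(q\cdot\mathrm{gap})$ while all smoothing errors stay $\negl$.
\item Then run the extractor on each of $t$ independent such blocks, with block $i$ using output length $q\ell_i$. Output the concatenation.
\item \emph{Correct guess.} At least one guess is good, and on that block we get genuine pseudorandomness.
\item \emph{Too-long guesses.} Blocks whose guess is too long could only fail pseudorandomness if the pseudo min-entropy is too low. I would avoid relying on them by instead using a universal-output method: for every block, the extracted length is chosen \emph{at most} the pseudo-entropy. This gives pseudorandom output of total length $\sum_i$ and key length $\sum_i(\text{real entropy}+\text{seed})$.
\end{itemize}

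Expansion would then come from the correct guess dominating, after a final step that stretches output length relative to key length by repetition.

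I expect the main obstacle to be precisely this uniformity step. In the quantum setting, a wrong guess of $\ell$ that is too large breaks single-copy pseudorandomness of the whole tensor product, since a distinguisher can test that block. A wrong guess that is too small wastes key without expanding. The first thing I would try is to make the entropy profile known rather than guessed: flatten the EFI-derived distribution by parallel repetition (the quantum asymptotic equipartition property), so that its pseudo and true smooth min-entropies concentrate around $q\cdot H$.

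The remaining unknown is then the single real number $H$ per $\lambda$. I would eliminate it with a random-offset argument in which a uniformly random $i\in[t]$ is chosen inside the key, costing only $\log t$ key bits. Under this choice the output is a mixture over guesses, and mixtures of pseudorandom-on-average states with exact maximally mixed completions, padded to a common length, remain pseudorandom. The paddings are arranged so that the overestimated guesses are replaced by truly random padding paid for by the gap. Verifying that this accounting leaves a net expansion of at least one qubit is where I expect most of the work to lie.
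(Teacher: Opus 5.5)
\paragraph{Assessment.} Your $\PRS\Rightarrow\EFI$ direction is correct and matches the paper's support-size argument. The $\EFI\Rightarrow\PRS$ direction has a genuine gap, and it sits at the step that is the whole content of the theorem.

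\paragraph{Where the argument fails.} Putting a uniformly random guess $i\in[t]$ into the key makes the average output the mixture $\frac1t\sum_i\sigma_i$. Your argument needs every $\sigma_i$ to be pseudorandom, because a mixture is only as good as its \emph{worst} branch, just like the tensor product you already rejected. A distinguisher for an overestimated branch therefore distinguishes the mixture with advantage about $1/t=1/\poly(\lambda)$. Classically one combines guesses by XOR, which is as good as the \emph{best} candidate; a mixture does not have that property.

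Your repair, replacing overestimated branches by truly random padding, cannot be implemented. Deciding which branches overestimate requires knowing $H$, and that is precisely the advice you are trying to remove. The padding also cannot be ``paid for by the gap.'' Every branch has output length $n$ and at most $r-\log t<n$ key bits. A pure-output branch whose average is maximally mixed on $p$ qubits needs at least $p$ key bits, since its average has rank at most $2$ to the number of key bits. So each branch would itself have to be an expanding $\PRS$, which wrong guesses are not. AEP flattening does not help either, since it only concentrates around the still-unknown $qH$.

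\paragraph{The missing idea.} The baseline entropy should be made known \emph{by construction}; this is the quantum analogue of HRV outputting $x$ together with $f(x)$.
\begin{enumerate}
\item The paper keeps the purification and flag. Then $G(b)=\ket{b}_F\ket{\psi_b}_{RE}$ has pure, orthogonal fixed-key outputs, so $S(\tau)=1$ exactly. Your step ``purify $\sigma$ with a short classical key'' glosses over this: keeping the purification is the natural way to get pure fixed-key outputs.
\item Because the revealed $E$ rules out substituting $\beta$ for $\alpha$ globally, the gap is recorded locally. The prefix witnesses $\gamma_j$ equal the actual prefix except $\gamma_f=\beta$, and each agrees with the actual preceding marginal. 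The chain rule then gives $\sum_j H(Q_j\mid Q_{<j})_{\gamma_j}=1+\chi\geq 2-\negl(\lambda)$.
\item The HRV random start is applied to \emph{qubit positions} of $L$ concatenated source copies, not to guesses of a length. Hence every window position has witness conditional entropy at least $(2-\eta)/M$.
\item After AEP over $t$ rows, a single predetermined length $\ell=t/2+3t/(4M)$ is extracted from every column. Security follows from column-by-column hybrids that use each witness only on its own prefix.
\item Unused qubits are kept under a quantum one-time pad so outputs stay pure, and the count gives stretch $(M-\tfrac32)t>0$.
\end{enumerate}
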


Meta-complexity studies the difficulty of determining an object's own
complexity. Examples include estimating the length of a shortest program
that describes a string and finding the smallest circuit for a truth table.
These questions connect cryptography to learning, compression, and circuit
lower bounds. Classically, average-case hardness of suitable Kolmogorov
complexity problems gives a \textbf{uniform meta-complexity characterization of one-way
functions}~\cite{lp,irs}.  Consequences for learning illustrate why such characterizations
are useful beyond constructing cryptographic primitives~\cite{hn}.

For quantum states, Cavalar et al.~\cite{ccc} give a uniform
meta-complexity characterization of $\PRS$ using a robust version of
\emph{G\'acs' quantum algorithmic entropy}~\cite{gacs}.
This entropy measures the algorithmic description complexity of a quantum
state. Its lower version is
\begin{equation*}
 \underline{\mathsf H}(\ket\psi)
 =-\log\bra\psi\mu_n\ket\psi,
\end{equation*}
where $\mu_n$ is a universal semi-density matrix that assigns greater
weight to states with shorter descriptions. A pure state can therefore
have high algorithmic entropy despite having zero von Neumann entropy.
The robust version $\underline{\mathsf H}^{\delta}$ maximizes this
quantity over pure states within trace distance $\delta$.
The promise problem $\mathsf{GapH}^{\varepsilon}(u,u+\Delta)$ asks,
given a single copy, to distinguish
\begin{equation*}
 \underline{\mathsf H}^{1-\varepsilon}(\ket\psi)\leq u
 \qquad\text{from}\qquad
 \underline{\mathsf H}(\ket\psi)\geq u+\Delta.
\end{equation*}
For a keyed family sampled with a uniform key, each promise event must
have probability at least $1/2-\negl(\lambda)$. Hardness means that the
two state ensembles conditioned on these events are computationally
indistinguishable given one copy.

\begin{theorem}[Meta-complexity characterization of $\PRS$~\cite{ccc}]
\label{thm:prs-meta}
Expanding single-copy pseudorandom state generators exist if and only if
there are a keyed family of pure states generated by a uniform QPT
algorithm, polynomial-time computable integer thresholds $u(\lambda)$ and
$\Delta(\lambda)=\omega(\log\lambda)$, and a constant
$0<\varepsilon<1/100$ such that
\begin{equation*}
 \mathsf{GapH}^{\varepsilon}
 \bigl(u(\lambda),u(\lambda)+\Delta(\lambda)\bigr)
\end{equation*}
is hard on average given one copy of a state whose key is sampled
uniformly, with the balanced promise and hardness requirements of~\cite{ccc}.
\end{theorem}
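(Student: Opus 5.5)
This characterization is imported from~\cite{ccc}, so the plan is to reconstruct both directions of their argument rather than derive anything new.

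\emph{From $\PRS$ to hardness of $\mathsf{GapH}$.} Let $G$ be an expanding $\PRS$ with $n$-bit keys and $m>n$ output qubits. Build the keyed family as a balanced mixture. With a uniform control bit $b$, output $G(k)$ when $b=0$. When $b=1$, output a state that is efficiently generated from the key and whose average is also close to maximally mixed; a uniformly random computational basis state $\ket{x}$ with $x\in\{0,1\}^m$ taken from fresh key bits is the simplest choice, and a Haar-like or $t$-design state is an alternative.
\begin{itemize}
\item The $b=0$ branch satisfies the low-entropy promise. Each $G(k)$ is described by the program for $G$ together with $k$, so $\mu_m$ assigns it weight at least $2^{-n-O(\log\lambda)}$. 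This gives $\underline{\mathsf H}^{1-\varepsilon}\le u:=n+O(\log\lambda)$.
\item The $b=1$ branch must satisfy the high-entropy promise. A counting argument over the trace $\Tr\mu_m\le 1$ shows that for most uniformly random basis states $\bra{x}\mu_m\ket{x}\le 2^{-m+\omega(\log\lambda)}$. Taking $m\ge n+2\Delta$, which is possible by stretching $G$ first if needed, puts these states above $u+\Delta$ with probability $1-\negl$.
\item Conditioned on either promise event, the single-copy averages are computationally indistinguishable from $I/2^m$, hence from each other.
\end{itemize}
This is exactly average-case hardness with balanced promise probabilities.

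\emph{From hardness of $\mathsf{GapH}$ to $\PRS$.} This is the harder direction. Let $\ket{\psi_k}$ be the family, and let $\rho_{\mathrm{lo}},\rho_{\mathrm{hi}}$ be the averages conditioned on the two promise events.
\begin{enumerate}
\item The high-entropy condition forces large min-entropy relative to $\mu$. Intuitively, the mixture $\rho_{\mathrm{hi}}$ has "quantum min-entropy" at least $u+\Delta-O(\log\lambda)$, since $\mu$ dominates every computable semi-density matrix, and in particular the efficiently generated $\rho_{\mathrm{hi}}$, up to a factor $2^{-O(\log\lambda)}$.
\item The low-entropy condition means each state is $(1-\varepsilon)$-close to a state of description length at most $u$, so $\rho_{\mathrm{lo}}$ is essentially supported on about $2^u$ dimensions.
\end{enumerate}
Next, apply quantum privacy amplification. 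Take a purification or a key-and-state pair, and use a uniformly chosen seed for a two-design unitary or a pairwise-independent hash together with the decoupling theorem. This extracts $u+\Delta/2$ qubits that are statistically close to maximally mixed on the high side. The computational indistinguishability of the low and high sides then transfers to pseudorandomness on the low side.
\begin{enumerate}
\setcounter{enumi}{2}
\item On the low side the actual generator has key length roughly $u+O(\log\lambda)$ plus the seed, while the output has $u+\Delta/2$ qubits. Since $\Delta=\omega(\log\lambda)$, the output is expanding.
\end{enumerate}
Pure output is maintained by outputting the hashed state itself, applying the seed's unitary to $\ket{\psi_k}$ and truncating with a measurement-free projection via Stinespring. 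The key must also be sampled so that the low branch is selected, which is handled by the standard hybrid trick.

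The main obstacle is making steps 1 and 2 quantitatively precise in the single-copy setting. The difficulty is converting G\'acs entropy bounds on individual pure states into smooth min-entropy bounds for the efficiently samplable ensemble, with losses only $O(\log\lambda)$. This is where the robustness parameter $\varepsilon<1/100$ and the universality of $\mu_n$ are used, and where I would follow the argument of~\cite{ccc} most closely.
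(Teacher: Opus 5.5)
The paper gives no proof of this theorem; it is imported from~\cite{ccc}, so your reconstruction has to stand on its own. Your direction from $\PRS$ to hardness is essentially sound after two repairs.

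First, $\underline{\mathsf H}^{1-\varepsilon}$ maximizes over the $(1-\varepsilon)$-ball, so a lower bound on $\bra{G(k)}\mu_m\ket{G(k)}$ alone does not give the YES condition. You need the operator domination $\mu_m\succeq2^{-n-O(\log\lambda)}\ket{G(k)}\bra{G(k)}$. Any $\phi$ within trace distance $1-\varepsilon$ of $G(k)$ has squared overlap at least $2\varepsilon-\varepsilon^2$ with it, so the domination gives $\bra\phi\mu_m\ket\phi\geq\varepsilon\,2^{-n-O(\log\lambda)}$. Second, the stretch $m\geq n+2\Delta+O(\log\lambda)$ should come from parallel repetition of $G$. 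Its average is a tensor power of the original average and stays pseudorandom by a hybrid argument.

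The converse has a genuine gap, located in step~3 and the purity claim rather than in the precision of steps~1--2.
\begin{itemize}
\item Your generator cannot have a key of length $u+O(\log\lambda)$. A short description of a YES state is not computable from $k$.
\item The promise events are defined by an uncomputable entropy, so no efficient procedure can select the low branch. Security does not need this anyway, since the extracted average of the whole mixture is already pseudorandom.
\item The key is therefore the family's $r$-bit key plus the seed. The same domination argument gives $\underline{\mathsf H}(\psi_k)\leq r+O(\log\lambda)$ for \emph{every} $k$. Since NO instances exist, $u+\Delta\leq r+O(\log\lambda)$. An output of $u+\Delta/2$ qubits is thus strictly shorter than the key: the construction compresses.
\item Keeping $u+\Delta/2$ qubits of $U_s\ket{\psi_k}$ yields a mixed state. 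No measurement-free projection removes qubits while preserving purity.
\end{itemize}

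The natural repair is the one this paper uses for $\EFI\Rightarrow\PRS$. Keep every qubit and hide the $m-\ell$ discarded ones with a secret quantum one-time pad (Fact~\ref{fact:qotp}), at a cost of $2(m-\ell)$ key bits. The accounting then exposes the missing idea. For the retained $\ell$ qubits to be near-uniform on the high side, the Araki--Lieb inequality forces $2\ell-m\lesssim S(\rho_{\mathrm{hi}})$. Since $\rho_{\mathrm{hi}}$ averages at most $2^r$ pure states, $S(\rho_{\mathrm{hi}})\leq r$. So the stretch $2\ell-m-r$ is never positive.

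You must also harvest the key's residual entropy. For instance, retain the key or a purification in the source so that its real entropy equals its key length exactly, as $G(b)$ does here. Then you need a computational-entropy surplus over that baseline, located by $u$ and amplified by a many-copy AEP.

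Steps~1--2 also need correction.
\begin{itemize}
\item $\rho_{\mathrm{hi}}$ is conditioned on an undecidable event, so domination must be applied to the full computable average. This yields only collision entropy, $\Tr\rho_{\mathrm{hi}}^2\leq2^{-u-\Delta+O(\log\lambda)}$.
\item Under the paper's maximizing definition, a YES instance is one with weight at least $1-2\varepsilon$ on the at most $2^u$-dimensional eigenspace of $\mu$ with eigenvalues $\geq2^{-u}$. It is not a state close to a single short-description state.
\item That constant error cannot be absorbed by a single-copy support bound and must be handled by amplification.
\end{itemize}
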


Cavalar et al.~\cite{ccc} also prove that EFI pairs are equivalent to
non-uniform expanding single-copy pseudorandom state generators
($\mathsf{nu\text{-}1PRS}$), yielding a non-uniform meta-complexity
characterization of EFI pairs. Our equivalence
$\EFI\Longleftrightarrow\PRS$ removes this non-uniformity and gives the
\textbf{first uniform meta-complexity characterization of EFI pairs}.

\begin{corollary}[Uniform meta-complexity characterization of EFI]
\label{cor:uniform-meta}
EFI pairs exist if and only if there are a keyed family of pure states
generated by a uniform QPT algorithm and parameters satisfying the
$\mathsf{GapH}^{\varepsilon}$ hardness condition in
Theorem~\ref{thm:prs-meta}.
\end{corollary}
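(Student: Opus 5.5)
The corollary follows by chaining two equivalences: the main result $\EFI\Leftrightarrow\PRS$ (the informal theorem above, proved in the body of the paper) and Theorem~\ref{thm:prs-meta} of Cavalar et al. The plan is to prove both directions by composing these two statements. No new construction is needed at this level. The content lies entirely in the main theorem, and the only thing to check is that the two statements refer to the \emph{same} notion of $\PRS$.

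For the forward direction, assume EFI pairs exist. By the main theorem, we obtain a uniform QPT expanding single-copy pseudorandom state generator. Applying the ``only if'' direction of Theorem~\ref{thm:prs-meta} to this generator yields:
\begin{itemize}
\item a keyed family of pure states produced by a uniform QPT algorithm,
\item polynomial-time computable thresholds $u(\lambda)$ and $\Delta(\lambda)=\omega(\log\lambda)$,
\item and a constant $\varepsilon<1/100$,
\end{itemize}
such that $\mathsf{GapH}^{\varepsilon}(u,u+\Delta)$ is hard on average with the balanced promise. For the converse, assume such a family and parameters exist. The ``if'' direction of Theorem~\ref{thm:prs-meta} gives a uniform $\PRS$, and the reverse implication $\PRS\Rightarrow\EFI$ of the main theorem then gives EFI pairs. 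This reverse implication is the easy direction: one pairs the generator's averaged output state with the maximally mixed state. These are statistically far because of expansion and rank considerations, and computationally indistinguishable by single-copy pseudorandomness.

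The one step needing care, and the main (if mild) obstacle, is verifying that the definitions match. The $\PRS$ produced by our main theorem must satisfy exactly the hypotheses of~\cite{ccc}: uniform QPT generation, expansion (output qubits strictly more than key bits), and single-copy security against uniform QPT adversaries with negligible advantage. Likewise, the $\PRS$ that~\cite{ccc} extracts from $\mathsf{GapH}$ hardness must fall under our definition. I would check that any expansion amount assumed in~\cite{ccc}, such as $n(\lambda)\geq \ell(\lambda)+1$ or an $\omega(\log\lambda)$ stretch, can be met. If necessary, I would first amplify the expansion by standard means, for instance by tensoring independent copies under independent keys and applying a hybrid argument, which preserves single-copy pseudorandomness and increases the stretch additively. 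With this matching in place, the corollary follows immediately by transitivity. Its uniformity is inherited precisely because the main theorem removes the $O(\log\lambda)$ advice of the earlier $\mathsf{nu\text{-}1PRS}$ equivalence.
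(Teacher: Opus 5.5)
Your proposal is correct and matches the paper's argument: the corollary is obtained by composing Theorem~\ref{thm:uniform-generation} ($\EFI\Leftrightarrow\PRS$) with the characterization in Theorem~\ref{thm:prs-meta}, and uniformity is inherited from the advice-free generator. One minor correction to your definition check: in this paper $\capprox$ is defined against non-uniform QPT adversaries with quantum advice, so ``uniform'' in the corollary refers to the generator rather than the adversary, and no expansion amplification is needed because Theorem~\ref{thm:prs-meta} is stated for the same expanding notion as Definition~\ref{def:prs}.
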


Goldin et al.~\cite{gmmy} distinguish three regions of quantum
cryptography: QuantuMania, CountCrypt, and NanoCrypt. The NanoCrypt
primitives they list are EFI pairs, quantum non-interactive bit
commitments, secure multiparty computation, and single-copy pseudorandom
state generators. Our new implication shows that all of these are
equivalent as existence assumptions. Together with the known
equivalences~\cite{bcq,bj}, this yields the picture in
Figure~\ref{fig:microcrypt-equivalences}.

\begin{figure}[htbp]
\centering
\begin{tikzpicture}[
 primitive/.style={draw=resultblue!65,fill=resultblue!5,rounded corners=3pt,
   minimum height=12mm,text width=30mm,align=center,font=\small},
 known/.style={{Latex[length=2.2mm,width=1.6mm]}-{Latex[length=2.2mm,width=1.6mm]},line width=0.8pt},
 directed/.style={-{Latex[length=2.2mm,width=1.6mm]},line width=0.8pt},
 labeltext/.style={font=\scriptsize,inner sep=1pt}
]
\node[primitive] (prs) at (0,0) {Expanding $\PRS$};
\node[primitive] (efi) at (5.3,0) {EFI pairs};
\node[primitive] (owsg) at (10.6,0) {$\mathsf{svOWSG}$};
\node[primitive] (ot) at (0,-2.5) {Quantum oblivious\\transfer};
\node[primitive] (com) at (5.3,-2.5) {Quantum non-interactive\\bit commitments};
\node[primitive] (mpc) at (10.6,-2.5) {Secure multiparty\\computation};
\draw[directed,line width=1pt,resultblue]
 ([yshift=3mm]efi.west) -- node[labeltext,above=9pt,text=resultblue]
 {\textbf{This work}} ([yshift=3mm]prs.east);
\draw[directed]
 ([yshift=-3mm]prs.east) -- node[labeltext,below=9pt]
 {\cite{my,ccc}} ([yshift=-3mm]efi.west);
\draw[known] (efi) -- node[labeltext,above=3pt] {\cite{bj}} (owsg);
\draw[known] (efi) -- node[labeltext,right=3pt] {\cite{bcq}} (com);
\draw[known] (com) -- node[labeltext,above=3pt] {\cite{bcq}} (ot);
\draw[known] (com) -- node[labeltext,above=3pt] {\cite{bcq}} (mpc);
\end{tikzpicture}
\caption{Equivalences at the bottom of Microcrypt.}
\label{fig:microcrypt-equivalences}
\end{figure}

In particular,
\begin{equation*}
 \EFI\quad\Longleftrightarrow\quad
 \mathsf{svOWSG}\quad\Longleftrightarrow\quad
 \text{quantum commitments}\quad\Longleftrightarrow\quad\PRS.
\end{equation*}
This is a quantum analogue of the classical equivalence between one-way
functions, EFID pairs, commitments, and pseudorandom generators.

\subsection{Proof overview}

We first describe the construction of Cavalar et al.~\cite{ccc} and where
it needs advice, and then explain how to remove that advice. Let $\lambda$ be the security parameter. We start with an EFI pair
$(\rho_0,\rho_1)$ of trace distance $1-\negl(\lambda)$.

Write
\begin{equation*}
 \bar\rho=\frac{\rho_0+\rho_1}{2},\qquad
 h=\frac{S(\rho_0)+S(\rho_1)}2,\qquad
 \chi=S(\bar\rho)-h.
\end{equation*}
Statistical separation gives $\chi\geq1-\negl(\lambda)$.
The correlated-flag construction gives indistinguishable states
\begin{equation*}
 \alpha_{RF}=\frac12\sum_{b=0}^1\rho_b\otimes\ket b\bra b_F,
 \qquad \beta_{RF}=\bar\rho\otimes\frac{I_F}{2},
\end{equation*}
with $S(\alpha)=1+h$ and $S(\beta)=1+h+\chi$.
After obtaining this entropy gap, CCC+ use an asymptotic equipartition
argument and a strong quantum extractor implemented by Clifford
unitaries. Entropy-to-extraction arguments also appear in~\cite{bj}.
The gap supplies room for extraction, but choosing the extraction length
requires locating the unknown baseline entropy. Their $O(\log\lambda)$
bits of advice specify that length.

\paragraph{Purification of the EFI.}
Purify the EFI generator and keep its purification register. If
$\ket{\psi_b}_{RE}$ purifies $\rho_b$, define
\begin{equation*}
 G(b)=\ket b_F\ket{\psi_b}_{RE},\qquad
 \tau=\frac12\sum_{b=0}^1G(b)G(b)^\dagger.
\end{equation*}
The fixed-key outputs are pure and orthogonal, so $S(\tau)=1$ exactly.
We therefore know the entropy of this source without estimating the
entropy of either EFI state. To obtain expansion, we would like a
computationally indistinguishable source of higher entropy and then to
extract from many copies. Revealing the purification prevents us from
simply substituting $\beta$ for $\alpha$ globally. Instead, we obtain
the needed entropy surplus through local prefix witnesses.

\paragraph{Establishing entropy surplus.}
Order the source registers as $R,F,E$ and pad with zero qubits to a
power-of-two width $M\geq4$. Let $Q_f=F$, and let $\tau^{[j]}$ be the
marginal on the first $j$ qubits. The ordering puts the purification
after the flag, so $\tau^{[f]}=\alpha$. Define
\begin{equation*}
 \gamma_j=\tau^{[j]}\quad(j\neq f),\qquad \gamma_f=\beta.
\end{equation*}
For every $j$, the witness is computationally indistinguishable from
the actual prefix and has exactly the actual preceding marginal:
\begin{equation*}
 \gamma_j\capprox\tau^{[j]},\qquad
 \Tr_{Q_j}\gamma_j=\tau^{[j-1]}.
\end{equation*}
The entropy chain rule now gives, for
$h_j=H(Q_j\mid Q_{<j})_{\gamma_j}$,
\begin{equation*}
 \sum_{j=1}^{M}h_j
 =S(\tau)+S(\beta)-S(\alpha)
 =1+\chi\geq2-\negl(\lambda).
\end{equation*}
This is the only entropy sum we need. The witnesses need not be
marginals of a single global state, and their individual conditional
entropies may be negative. Their exact agreement with the actual past
will let us use them in consecutive hybrids.

\paragraph{Random-start idea.}
A lower bound on the sum of the $h_j$ is known, but its summands are not.
To avoid choosing
a separate extraction length for each source coordinate, use the
random-start idea of~\cite{hrv}. Concatenate $L=4M$
independently keyed source copies, choose
$J\in\{0,\ldots,M-1\}$ uniformly, and take the length-$N=M(L-1)$
window beginning at position $J+1$.

For its $i$th prefix, replace only the current source copy by the
corresponding local witness. Denote the actual and witness states by
$\omega_i$ and $\Omega_i$. They satisfy
\begin{equation*}
 \Omega_i\capprox\omega_i,\qquad
 \Tr_{A_i}\Omega_i=\omega_{i-1},\qquad
 H(A_i\mid J,A_{<i})_{\Omega_i}
 \geq\frac1M\sum_jh_j\geq\frac{2-\negl(\lambda)}M.
\end{equation*}
For each window position, the shift makes the source coordinate
uniform. Truncation of the first copy only removes conditioning
systems, which increases conditional entropy by strong subadditivity.
Thus the classical random-start idea continues to work with quantum
side information, including negative conditional entropies.

\paragraph{Asymptotic equipartition of entropy.}
Repeat the row independently $t$ times and group the $i$th window
qubits into a column $C_i$. Its original past is
$P_i=(J^t,C_1,\ldots,C_{i-1})$. For an explicit polynomial $t$, the
conditional entropy bound yields
\begin{equation*}
 H_{\min}^{2^{-\lambda}/12}(C_i\mid P_i)_{\Omega_i^{\otimes t}}
 \geq\frac{7t}{4M}.
\end{equation*}
We use the same predetermined extraction length
$\ell=t/2+3t/(4M)$ in every column. Extraction makes its retained
register $R_i$ close to maximally mixed independently of its original
past, including the public seed.

The security hybrids replace the retained columns one by one, from
last to first. First substitute the witness for the actual prefix;
this preserves computational indistinguishability. Then extract the
current column. Its retained output becomes statistically close to an
independent maximally mixed state, and its remaining past is exactly
the actual preceding prefix. Applying the earlier extraction maps
therefore gives the next hybrid. This order uses each witness only
on the prefix for which it is defined.

\paragraph{Net expansion for 1PRS.}
The actual construction keeps the unused column qubits and those outside
the windows and masks them with independent secret Paulis. Every fixed-key
output remains pure. The quantum one-time-pad identity makes the averaged
output equal to the retained ensemble tensored with an independent
maximally mixed masked register.

Let $q$ count the public seed and shift bits. There are $MLt$ source
qubits, $Lt$ source-key bits, and
$d=N(t-\ell)+Mt$ masked qubits. The output and key lengths satisfy
\begin{equation*}
 n=q+MLt,\qquad r=q+Lt+2d,\qquad
 n-r=\left(M-\frac32\right)t>0.
\end{equation*}
Thus the entropy surplus pays for the source keys and all Pauli masks.
The new ingredients are
retaining the purification to fix the source entropy and applying the
random-start argument to local witnesses with quantum side information.

\paragraph{Reverse implication.}
The implication from $\PRS$ to EFI is known~\cite{my,ccc} and follows
from a simple support-size argument. Given a $\PRS$ with $r$-bit keys
and $n$-qubit outputs $\ket{\phi_k}$, where $n>r$, consider
\begin{equation*}
 \rho_0=2^{-r}\sum_{k\in\{0,1\}^r}\ket{\phi_k}\bra{\phi_k},
 \qquad \rho_1=I/2^n.
\end{equation*}
Both states are efficiently preparable, and single-copy pseudorandomness
makes them computationally indistinguishable. The support of $\rho_0$
has dimension at most $2^r$, whereas $\rho_1$ assigns at most
$2^{r-n}$ weight to this support. Their trace distance is therefore at
least $1-2^{r-n}\geq1/2$, giving an EFI pair.

\section{Preliminaries}\label{sec:prelim}

All logarithms are base two, $\lambda\geq1$ is the security parameter,
and $[m]=\{1,\ldots,m\}$. Register widths and length functions are
polynomially bounded and computable in polynomial time from $1^\lambda$.
Write $\MM{A}=I_A/\dim A$.

\subsection{Algorithms and cryptographic primitives}

\begin{definition}[QPT algorithm]
A (uniform) quantum polynomial-time (QPT) algorithm is a uniform family
$\{C_\lambda\}_{\lambda\in\mathbb N}$ of quantum circuits of size
$\poly(\lambda)$. Uniformity means that a deterministic classical
polynomial-time Turing machine, on input $1^\lambda$, outputs a
description of $C_\lambda$.
\end{definition}

\begin{definition}[Non-uniform QPT algorithm with quantum advice]
A {non-uniform QPT algorithm} (or {non-uniform QPT adversary}) is a family of tuples
$\mathcal{A} = \left\{(C_\lambda, \rho_\lambda)\right\}_{\lambda \in \mathbb{N}}$ such that:
\begin{itemize}
\item $\{C_\lambda\}_{\lambda \in \mathbb{N}}$ is a family of polynomial-size quantum
circuits.
\item $\{\rho_\lambda\}_{\lambda \in \mathbb{N}}$ is a family of {advice states}, where
$\rho_\lambda$ is a quantum state on $\poly(\lambda)$ qubits.
\end{itemize}
 Note that
$\{\rho_\lambda\}_\lambda$ need not be efficiently generatable; $C_\lambda$ receives
$\rho_\lambda$ in a designated  (advice) register. Without loss of generality, the advice state can be assumed to be pure by providing a purification of the advice state.
\end{definition} 

Unless stated otherwise, adversaries are non-uniform QPT algorithms with
quantum advice, the commonly considered adversary model~\cite{kt,bj}.
The advice depends on the security parameter and is independent of the
freshly sampled challenge and its branch.

\begin{definition}[Computational indistinguishability]
Families $\{\rho_{0,\lambda}\}_\lambda$ and
$\{\rho_{1,\lambda}\}_\lambda$ of polynomial-size quantum states are
computationally indistinguishable, written $\rho_0\capprox\rho_1$
or $\rho_0\approx_{\negl_C}\rho_1$, if for
every adversary $\mathcal A$ there is a negligible function $\nu$ such that
\begin{equation*}
 \left|\Pr[\mathcal A(1^\lambda,\rho_{0,\lambda})=1]
       -\Pr[\mathcal A(1^\lambda,\rho_{1,\lambda})=1]\right|
 \leq\nu(\lambda).
\end{equation*}
We denote this absolute difference by
$\adv_{\mathcal A}(\rho_0,\rho_1)$.
\end{definition}

\begin{fact}\label{fact:datanegl}
Let $\{\rho_{0,\lambda}\}_\lambda$ and
$\{\rho_{1,\lambda}\}_\lambda$ be families of states such that
$\rho_0\approx_{\negl_C}\rho_1$, and let $B$ be a non-uniform $\QPT$
algorithm. Then $B(\rho_0)\approx_{\negl_C}B(\rho_1)$.
\end{fact}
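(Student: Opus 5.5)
The plan is a direct reduction: any distinguisher for $B(\rho_0)$ versus $B(\rho_1)$, composed with $B$, becomes a distinguisher for $\rho_0$ versus $\rho_1$ with the same advantage.

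Fix an arbitrary adversary $\mathcal A=\{(C_\lambda,\sigma_\lambda)\}_\lambda$ that tries to distinguish $B(\rho_{0,\lambda})$ from $B(\rho_{1,\lambda})$, and write $B=\{(D_\lambda,\eta_\lambda)\}_\lambda$, where $D_\lambda$ is a polynomial-size circuit and $\eta_\lambda$ is its advice state. From these we build a single adversary $\mathcal A'=\{(C'_\lambda,\sigma'_\lambda)\}_\lambda$ against $\rho_0,\rho_1$.

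1. Set the advice to $\sigma'_\lambda=\eta_\lambda\otimes\sigma_\lambda$. This is still a $\poly(\lambda)$-qubit state that depends only on $\lambda$. If one insists on pure advice, take a purification of each factor, as the definition allows.
2. Let $C'_\lambda$ apply $D_\lambda$ to the challenge register together with the $\eta_\lambda$ part of the advice. It then runs $C_\lambda$ on $D_\lambda$'s output together with the $\sigma_\lambda$ part of the advice, and outputs whatever $C_\lambda$ outputs. The circuit $C'_\lambda$ is the composition of two polynomial-size circuits, so it is polynomial-size.
3. For each $b$, the state that $C_\lambda$ sees inside $\mathcal A'$ on challenge $\rho_{b,\lambda}$ is exactly the state it sees when $\mathcal A$ is given $B(\rho_{b,\lambda})$. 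This holds because $B(\rho_{b,\lambda})$ is by definition the output of $D_\lambda$ on $\rho_{b,\lambda}\otimes\eta_\lambda$, and $\sigma_\lambda$ is in tensor product with it. Hence
\begin{equation*}
\Pr[\mathcal A'(1^\lambda,\rho_{b,\lambda})=1]=\Pr[\mathcal A(1^\lambda,B(\rho_{b,\lambda}))=1]\qquad(b=0,1),
\end{equation*}
and therefore $\adv_{\mathcal A}(B(\rho_0),B(\rho_1))=\adv_{\mathcal A'}(\rho_0,\rho_1)$.
4. Since $\rho_0\capprox\rho_1$, there is a negligible $\nu$ with $\adv_{\mathcal A'}(\rho_0,\rho_1)\le\nu(\lambda)$. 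The same $\nu$ therefore bounds the advantage of $\mathcal A$. Because $\mathcal A$ was arbitrary, $B(\rho_0)\capprox B(\rho_1)$.

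There is no real obstacle here. The one point to check is in step 1: the advice of $\mathcal A'$ must be independent of the challenge and its branch, as the model requires. It is, because both $\eta_\lambda$ and $\sigma_\lambda$ depend only on $\lambda$.
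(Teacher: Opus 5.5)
Your proposal is correct and is the same reduction the paper uses: compose the distinguisher with $B$ and give the new adversary the tensor product of the two advice states. The only difference is presentation. The paper argues by contradiction from an inverse-polynomial advantage on infinitely many $\lambda$, whereas you show directly that the two advantages are equal for every $\lambda$; the two are equivalent.
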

\begin{proof}
Assume for contradiction that there exists a non-uniform $\QPT$
algorithm $A$ and a positive polynomial $p(\cdot)$ such that for
infinitely many $\lambda$,
\begin{equation*}
 \Big\vert\Pr[A(1^\lambda,B(\rho_{0,\lambda}))=1]
 -\Pr[A(1^\lambda,B(\rho_{1,\lambda}))=1]\Big\vert
 \geq\frac1{p(\lambda)}.
\end{equation*}
We now create a non-uniform adversary $A'$ to distinguish between
$\rho_{0,\lambda}$ and $\rho_{1,\lambda}$ as follows:
\begin{itemize}
\item $A'$ takes as input the security parameter $1^\lambda$ and
the state $\rho_{b,\lambda}$, for $b\in\{0,1\}$.
\item $A'$ returns $A(1^\lambda,B(\rho_{b,\lambda}))$.
\end{itemize}
The advice of $A'$ consists of the tensor product of the advice states
of $A$ and $B$, so $A'$ is a non-uniform $\QPT$ algorithm. We have
\begin{align*}
 &\Big\vert\Pr[A'(1^\lambda,\rho_{0,\lambda})=1]
 -\Pr[A'(1^\lambda,\rho_{1,\lambda})=1]\Big\vert\\
 &\qquad=\Big\vert\Pr[A(1^\lambda,B(\rho_{0,\lambda}))=1]
 -\Pr[A(1^\lambda,B(\rho_{1,\lambda}))=1]\Big\vert\\
 &\qquad\geq\frac1{p(\lambda)}.
\end{align*}
This contradicts $\rho_0\approx_{\negl_C}\rho_1$.
\end{proof}

\begin{fact}[Indistinguishability of multiple samples]
\label{fact:indistinguishability_multiple_samples}
If $\rho_0\capprox\rho_1$, then
$\rho_0^{\otimes t}\capprox\rho_1^{\otimes t}$ for every
polynomially bounded integer function $t=t(\lambda)$.
\end{fact}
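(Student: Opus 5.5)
The plan is a standard hybrid argument over the $t$ tensor factors. For $i\in\{0,1,\ldots,t\}$ define the hybrid state
\begin{equation*}
 H_i=\rho_1^{\otimes i}\otimes\rho_0^{\otimes(t-i)},
\end{equation*}
so that $H_0=\rho_0^{\otimes t}$ and $H_t=\rho_1^{\otimes t}$. By the triangle inequality, for any adversary $\mathcal A$,
\begin{equation*}
 \adv_{\mathcal A}\bigl(\rho_0^{\otimes t},\rho_1^{\otimes t}\bigr)\leq\sum_{i=1}^{t}\adv_{\mathcal A}(H_{i-1},H_i).
\end{equation*}
It therefore suffices to bound each consecutive difference uniformly by a single negligible function $\nu$. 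The total is then at most $t\,\nu$, which is negligible because $t$ is polynomially bounded.

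Consecutive hybrids differ only in position $i$. I would reduce each step to distinguishing one copy of $\rho_0$ from one copy of $\rho_1$. On a challenge $\sigma$, the reduction $\mathcal A'_i$ feeds $\mathcal A$ the state $\rho_1^{\otimes(i-1)}\otimes\sigma\otimes\rho_0^{\otimes(t-i)}$. This state equals $H_{i-1}$ when $\sigma=\rho_0$ and $H_i$ when $\sigma=\rho_1$, so $\adv_{\mathcal A'_i}(\rho_0,\rho_1)=\adv_{\mathcal A}(H_{i-1},H_i)$. The filler states $\rho_1^{\otimes(i-1)}\otimes\rho_0^{\otimes(t-i)}$ are independent of the challenge, so they can be supplied as quantum advice (a polynomial number of qubits). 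Alternatively, when $\rho_0,\rho_1$ are efficiently preparable, as in all our uses, $\mathcal A'_i$ can simply generate them. This is essentially Fact~\ref{fact:datanegl} applied to the map $\sigma\mapsto$ (fillers $\otimes\,\sigma$), followed by $\mathcal A$.

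The main obstacle is uniformity over the index $i$. Indistinguishability of $\rho_0,\rho_1$ gives each fixed $\mathcal A'_i$ its own negligible function, but there are $t(\lambda)$ many of them and $i$ varies with $\lambda$. To avoid this, I would use the standard averaging reduction. A single adversary $\mathcal A'$ samples $i\in[t]$ uniformly and runs $\mathcal A'_i$, with its advice being the tensor product of all $t$ filler states, which is still polynomial size. If the adversary instead prepares fillers itself, it prepares them after sampling $i$.

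To bound the advantage of $\mathcal A'$, note that $\Pr[\mathcal A'(\rho_b)=1]=\frac1t\sum_i\Pr[\mathcal A(H_{i-1+b})=1]$, and the sum over $i$ telescopes. This gives
\begin{equation*}
 \adv_{\mathcal A'}(\rho_0,\rho_1)=\tfrac1t\,\adv_{\mathcal A}\bigl(\rho_0^{\otimes t},\rho_1^{\otimes t}\bigr).
\end{equation*}
So if $\mathcal A$ had advantage at least $1/p(\lambda)$ infinitely often, then $\mathcal A'$ would have advantage at least $1/(t\,p)$ infinitely often. Since $t\,p$ is a polynomial, this contradicts $\rho_0\capprox\rho_1$.
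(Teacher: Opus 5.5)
Your proof is correct. Like the paper's, it is a hybrid argument over the $t$ tensor factors, with the filler copies supplied as challenge-independent quantum advice. The two proofs differ in how the hybrid index is chosen.

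The paper handles your ``uniformity over $i$'' concern non-uniformly. For each $\lambda$ on which $\mathcal A$ has advantage at least $1/p(\lambda)$, the triangle inequality gives a worst index $j=j(\lambda)$ with adjacent advantage at least $1/(p(\lambda)t)$. This $j$ is hardwired into the advice of $\mathcal A'$, together with copies of $\rho_0^{\otimes t}$ and $\rho_1^{\otimes t}$. You instead sample $i$ uniformly and use the telescoping identity $\adv_{\mathcal A'}(\rho_0,\rho_1)=\frac1t\,\adv_{\mathcal A}(\rho_0^{\otimes t},\rho_1^{\otimes t})$.

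The paper's route is a little shorter and relies on the non-uniform adversary model it adopts anyway. Your averaging route gives the exact loss factor $1/t$ at every $\lambda$ and never needs to know which index is bad. So when $\rho_0,\rho_1$ are efficiently preparable and the fillers are generated rather than given as advice, the same reduction also works against uniform adversaries.

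One minor point: your advice is larger than it needs to be. For any sampled $i$, $t-1$ copies each of $\rho_0$ and $\rho_1$ suffice, rather than the tensor product of all $t$ filler states.
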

\begin{proof}
Suppose that an adversary $\mathcal A$ distinguishes the tensor powers
with advantage at least $1/p(\lambda)$ for infinitely many $\lambda$.
For each such $\lambda$, define
$H_i=\rho_0^{\otimes(t-i)}\otimes\rho_1^{\otimes i}$,
$0\leq i\leq t$. The triangle inequality gives
\begin{equation*}
 \frac1{p(\lambda)}
 \leq\sum_{i=1}^{t}
 \left|\Pr[\mathcal A(1^\lambda,H_{i-1})=1]
       -\Pr[\mathcal A(1^\lambda,H_i)=1]\right|.
\end{equation*}
Hence some $j\in[t]$ has an adjacent advantage at least
$1/(p(\lambda)t)$. Give a new adversary $\mathcal A'$ the index $j$,
the original advice of $\mathcal A$, and independent copies of
$\rho_0^{\otimes t}$ and $\rho_1^{\otimes t}$ as quantum advice.
On receiving $\rho_c$, it runs $\mathcal A$ on
\begin{equation*}
 \rho_0^{\otimes(t-j)}\otimes\rho_c\otimes
 \rho_1^{\otimes(j-1)}.
\end{equation*}
Its advice has polynomial size and is independent of the challenge.
For $c=0,1$, these inputs are respectively $H_{j-1},H_j$, so
$\mathcal A'$ has inverse-polynomial advantage for infinitely many
$\lambda$, contradicting $\rho_0\capprox\rho_1$.
\end{proof}

\begin{definition}[EFI pairs~\cite{bcq}]\label{def:efi}
An EFI pair is a pair of state families generated by a QPT algorithm
$\mathsf{StateGen}(1^\lambda,b)$, $b\in\{0,1\}$, whose outputs
$\rho_{b,\lambda}$ satisfy:
\begin{enumerate}
\item $\rho_0\capprox\rho_1$;
\item $D(\rho_{0,\lambda},\rho_{1,\lambda})\geq1/p(\lambda)$ for
some polynomial $p$ and all sufficiently large $\lambda$.
\end{enumerate}
\end{definition}

\begin{fact}[EFI amplification~\cite{bqsy}]\label{fact:amplification}
The EFI definition is unchanged if its distance requirement is replaced
by $D(\rho_0,\rho_1)\geq1-\negl(\lambda)$.
\end{fact}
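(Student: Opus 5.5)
The statement is the EFI amplification fact \ref{fact:amplification}. One direction is trivial: a pair with $D(\rho_0,\rho_1)\geq 1-\negl(\lambda)$ certainly has distance at least $1/p(\lambda)$ for large $\lambda$. For the converse, start from an EFI pair with distance $\delta(\lambda)\geq 1/p(\lambda)$ and produce a new pair whose distance is $1-\negl(\lambda)$ while computational indistinguishability is preserved. Plain repetition $\rho_b^{\otimes t}$ keeps indistinguishability by Fact~\ref{fact:indistinguishability_multiple_samples}, but it does not suffice by itself. Trace distance of tensor powers grows only like $1-e^{-\Omega(t\delta^2)}$ when fidelity is the relevant quantity. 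Taking $t=\lambda p(\lambda)^2$ handles this, so the real content is a quantitative statistical bound.

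\textbf{Step 1: bound fidelity of a single copy.} By Fuchs--van de Graaf, $F(\rho_0,\rho_1)\leq\sqrt{1-D(\rho_0,\rho_1)^2}\leq\sqrt{1-1/p^2}$, where $F$ is root fidelity.

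\textbf{Step 2: tensor.} Fidelity is multiplicative, so $F(\rho_0^{\otimes t},\rho_1^{\otimes t})\leq(1-1/p^2)^{t/2}\leq e^{-t/(2p^2)}$.

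\textbf{Step 3: return to trace distance.} The other Fuchs--van de Graaf inequality gives
\begin{equation*}
D(\rho_0^{\otimes t},\rho_1^{\otimes t})\geq 1-F(\rho_0^{\otimes t},\rho_1^{\otimes t})\geq 1-e^{-t/(2p^2)}.
\end{equation*}
With $t=\lambda p(\lambda)^2$ the right-hand side is $1-e^{-\lambda/2}=1-\negl(\lambda)$.

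\textbf{Step 4: efficiency and indistinguishability.} The new generator runs $\mathsf{StateGen}(1^\lambda,b)$ independently $t$ times. This is uniform QPT since $t$ is polynomial and computable from $1^\lambda$. Computational indistinguishability follows from Fact~\ref{fact:indistinguishability_multiple_samples} with this polynomial $t$.

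The only point needing care is that the distance bound $1/p$ holds only for sufficiently large $\lambda$. That is exactly what the negligible requirement allows, since finitely many $\lambda$ are irrelevant. I expect no real obstacle; the subtle step is choosing the fidelity route rather than subadditivity of trace distance, because the latter only upper-bounds the distance of tensor powers.
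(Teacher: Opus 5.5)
Your proof is correct. It takes a different and more elementary route than the paper, which gives no argument of its own and simply cites the efficient quantum parallel repetition theorem of~\cite{bqsy}.

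That theorem is a computational hardness-amplification result. It targets the harder regime where the pair's indistinguishability is itself only weak (an inverse-polynomial advantage bound), so hiding must be boosted by a genuine parallel-repetition argument rather than by a hybrid.

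You instead use the fact that Definition~\ref{def:efi} already demands negligible indistinguishability, so only the statistical side needs amplifying. That part is purely information-theoretic: both Fuchs--van de Graaf inequalities plus multiplicativity of root fidelity, with $t=\lambda p(\lambda)^2$ copies. The computational part is just the hybrid argument of Fact~\ref{fact:indistinguishability_multiple_samples}. It applies because the reduction can obtain samples of both $\rho_0$ and $\rho_1$, either as advice or by running $\mathsf{StateGen}$ itself.

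Your route therefore gives a self-contained proof of exactly the statement the paper uses. The cited theorem buys weak-to-strong amplification of indistinguishability, which this paper never needs.
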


Henceforth an EFI pair has distance $1-\negl(\lambda)$.

\begin{definition}[Single-copy pseudorandom states~\cite{my}]
\label{def:prs}
A single-copy pseudorandom state generator is a QPT algorithm that maps
each $r(\lambda)$-bit key $k$ to a pure $n(\lambda)$-qubit state
$\ket{\phi_{\lambda,k}}$ such that
\begin{equation*}
 \sigma_\lambda
 =2^{-r}\sum_{k\in\{0,1\}^r}\ket{\phi_{\lambda,k}}\bra{\phi_{\lambda,k}}
 \capprox I/2^n.
\end{equation*}
It is expanding if $n(\lambda)>r(\lambda)$. Throughout, $\PRS$ denotes
this expanding variant. A non-uniform generator
($\mathsf{nu\text{-}1PRS}$) may additionally receive a classical advice
string depending on $\lambda$.
\end{definition}

This definition is equivalent to indistinguishability from one copy of
a Haar-random pure state, whose average is $I/2^n$.

\subsection{Quantum information facts}

\begin{definition}[Trace distance]\label{def:trace-distance}
For an operator $A$, its trace norm is
$\|A\|_1:=\Tr\sqrt{A^\dagger A}$.
The trace distance between states $\rho$ and $\sigma$ on the same
Hilbert space is
\begin{equation*}
 D(\rho,\sigma):=\frac12\|\rho-\sigma\|_1.
\end{equation*}
\end{definition}

\begin{fact}[Data processing]\label{fact:data-processing}
For states $\rho,\sigma$ and a completely positive trace-preserving
(CPTP) map $\Phi$,
\begin{equation*}
 D(\Phi(\rho),\Phi(\sigma))\leq D(\rho,\sigma).
\end{equation*}
\end{fact}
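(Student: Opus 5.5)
This is the data-processing inequality for trace distance, and I would prove it through the variational (Helstrom) characterization. The plan has three steps: first express $D$ as a maximum over two-outcome measurements, then pull the map $\Phi$ onto the measurement operator via the adjoint, and finally check that the pulled-back operator is again a valid measurement effect.

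\emph{Step 1: the variational formula.} I would show that for states $\rho,\sigma$,
\begin{equation*}
 D(\rho,\sigma)=\max_{0\leq P\leq I}\Tr\bigl(P(\rho-\sigma)\bigr).
\end{equation*}
Since $\rho-\sigma$ is Hermitian, take its Jordan decomposition $\rho-\sigma=X_+-X_-$ with $X_\pm\geq0$ of orthogonal supports. Then $\|\rho-\sigma\|_1=\Tr X_++\Tr X_-$. Because $\Tr(\rho-\sigma)=0$, we also have $\Tr X_+=\Tr X_-$, so $D(\rho,\sigma)=\Tr X_+$.

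For any $0\leq P\leq I$,
\begin{equation*}
 \Tr\bigl(P(X_+-X_-)\bigr)\leq\Tr(PX_+)\leq\Tr X_+ .
\end{equation*}
Equality is attained by taking $P$ to be the projector onto the support of $X_+$.

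\emph{Step 2: transfer through $\Phi$.} Let $P$ be optimal for the pair $\Phi(\rho),\Phi(\sigma)$. Writing $\Phi^\dagger$ for the Hilbert--Schmidt adjoint of $\Phi$,
\begin{equation*}
 D(\Phi(\rho),\Phi(\sigma))=\Tr\bigl(P\,\Phi(\rho-\sigma)\bigr)=\Tr\bigl(\Phi^\dagger(P)(\rho-\sigma)\bigr).
\end{equation*}

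\emph{Step 3: $\Phi^\dagger(P)$ is a valid effect.} Complete positivity of $\Phi$ (positivity suffices here) makes $\Phi^\dagger$ positive, so $\Phi^\dagger(P)\geq0$. Trace preservation of $\Phi$ makes $\Phi^\dagger$ unital, i.e.\ $\Phi^\dagger(I)=I$. Hence $I-\Phi^\dagger(P)=\Phi^\dagger(I-P)\geq0$, so $0\leq\Phi^\dagger(P)\leq I$. The variational formula from Step 1 then bounds the last expression by $D(\rho,\sigma)$, which gives the claim.

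\emph{Alternative route.} One could instead avoid adjoints: decompose $\Phi(\rho-\sigma)=\Phi(X_+)-\Phi(X_-)$ and use the triangle inequality, noting $\|\Phi(X_\pm)\|_1=\Tr X_\pm$ because $\Phi$ is positive and trace preserving.

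The only step needing any care is the Jordan decomposition with equal traces in Step 1. It is standard, so I do not expect any step to be a real obstacle.
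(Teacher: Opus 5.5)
The paper gives no proof of this statement: it is quoted in the preliminaries as a standard fact, so there is no in-paper argument to compare against. Your proof is correct and complete.

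\begin{itemize}
\item The upper-bound half of your Step~1 is exactly the measurement bound the paper records in Fact~\ref{fact:trace-distance}.
\item The projector onto the support of $X_+$ gives attainability for the pair $\Phi(\rho),\Phi(\sigma)$, and these are states because $\Phi$ is CPTP.
\item Your observation that $\Phi^\dagger$ is positive and unital is precisely what makes $\Phi^\dagger(P)$ a valid effect, so Step~3 closes the argument.
\end{itemize}

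Your remark that positivity plus trace preservation already suffices is also right. Your alternative route, $\|\Phi(X_+)-\Phi(X_-)\|_1\leq\Tr X_++\Tr X_-$, is an equally valid proof of the same inequality that avoids adjoints.
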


\begin{fact}[Trace distance]\label{fact:trace-distance}
Trace distance bounds the difference in acceptance probabilities of any
quantum measurement: for every $0\leq M\leq I$,
\begin{equation*}
 \bigl|\Tr(M\rho)-\Tr(M\sigma)\bigr|\leq D(\rho,\sigma).
\end{equation*}
For states with the same classical distribution,
\begin{equation*}
 D\left(\sum_xp_x\ket x\bra x\otimes\rho_x,
         \sum_xp_x\ket x\bra x\otimes\sigma_x\right)
 =\sum_xp_xD(\rho_x,\sigma_x).
\end{equation*}
\end{fact}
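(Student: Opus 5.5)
For the first claim I would use the Jordan decomposition of the Hermitian operator $\rho-\sigma$. Diagonalizing it and splitting its eigenvalues by sign gives $\rho-\sigma=P-Q$, where $P,Q\geq0$ have orthogonal supports. Then $\sqrt{(\rho-\sigma)^\dagger(\rho-\sigma)}=P+Q$, so $\|\rho-\sigma\|_1=\Tr P+\Tr Q$. Since $\Tr(\rho-\sigma)=0$, we have $\Tr P=\Tr Q$, and hence $D(\rho,\sigma)=\Tr P=\Tr Q$.

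For $0\leq M\leq I$ I would then bound
\begin{equation*}
 \Tr(M\rho)-\Tr(M\sigma)=\Tr(MP)-\Tr(MQ)\leq\Tr(MP)\leq\Tr P=D(\rho,\sigma).
\end{equation*}
This uses two facts. First, $\Tr(MQ)\geq0$ because it equals $\Tr(Q^{1/2}MQ^{1/2})$ with $Q^{1/2}MQ^{1/2}\geq0$. Second, $\Tr(MP)=\Tr(P^{1/2}MP^{1/2})\leq\Tr P$ because $M\leq I$. Applying the same argument with $\rho$ and $\sigma$ swapped, or with $I-M$ in place of $M$, gives the matching lower bound $\geq -D(\rho,\sigma)$. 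Together these yield the absolute-value statement.

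For the second claim I would observe that the difference of the two classical-quantum states is block diagonal:
\begin{equation*}
 A=\sum_x p_x\ket x\bra x\otimes(\rho_x-\sigma_x).
\end{equation*}
The operator $A^\dagger A=\sum_x p_x^2\ket x\bra x\otimes(\rho_x-\sigma_x)^\dagger(\rho_x-\sigma_x)$ is block diagonal with positive semidefinite blocks. Its square root is therefore obtained blockwise: the operator $\sum_x p_x\ket x\bra x\otimes\sqrt{(\rho_x-\sigma_x)^\dagger(\rho_x-\sigma_x)}$ is positive semidefinite and squares to $A^\dagger A$, so by uniqueness of the positive square root it equals $\sqrt{A^\dagger A}$. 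Taking traces gives $\|A\|_1=\sum_x p_x\|\rho_x-\sigma_x\|_1$, and halving both sides gives the stated identity.

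I do not expect a real obstacle here. The only step needing care is the uniqueness-of-square-root argument that lets the trace norm split over the blocks, together with noting that $p_x\geq0$ so the weights can be pulled out of the absolute value.
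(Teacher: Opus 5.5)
Your proof is correct. The Jordan decomposition $\rho-\sigma=P-Q$ with $\Tr P=\Tr Q=D(\rho,\sigma)$ gives the measurement bound, and the blockwise positive square root (using $p_x\geq0$) gives $\|A\|_1=\sum_x p_x\|\rho_x-\sigma_x\|_1$.

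The paper does not prove this statement. It records it as a standard fact and only invokes it, in Lemma~\ref{lem:clifford-sampling} and for the public-seed clause of Fact~\ref{fact:extraction}. So there is no separate route to compare, and yours is the standard textbook argument.
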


\begin{fact}[Conditional entropy]\label{fact:entropy}
Write $S(A)_\rho=-\Tr\rho_A\log\rho_A$ and
$H(A\mid B)_\rho=S(AB)_\rho-S(B)_\rho$. Conditional entropy obeys
the chain rule, is additive on independent states, and satisfies
$H(A\mid B)\geq H(A\mid BC)$ by strong subadditivity. For classical $X$,
\begin{equation*}
 H(A\mid BX)_{\sum_xp_x\ket x\bra x\otimes\rho^x}
 =\sum_xp_xH(A\mid B)_{\rho^x}.
\end{equation*}
\end{fact}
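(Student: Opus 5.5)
The plan is to derive each clause of the Fact directly from the definition $H(A\mid B)_\rho=S(AB)_\rho-S(B)_\rho$. Only one genuinely nontrivial input is needed: strong subadditivity of von Neumann entropy (Lieb--Ruskai). I will take it as a known theorem rather than reprove it.

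First, the \emph{chain rule}. I will prove $H(AB\mid C)=H(A\mid C)+H(B\mid AC)$ and its iterated form $H(Q_1\cdots Q_m\mid C)=\sum_j H(Q_j\mid Q_{<j}C)$. Both are telescoping identities: expanding each conditional entropy as a difference of joint entropies, the intermediate terms $S(Q_{\le j}C)$ cancel, and one is left with $S(Q_{\le m}C)-S(C)$. No positivity is used, which matters because the paper applies this rule to witnesses whose individual conditional entropies may be negative.

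Second, \emph{additivity}. For $\rho_{AB}\otimes\sigma_{A'B'}$, the spectrum of a tensor product consists of products of eigenvalues. Hence $S(\rho\otimes\sigma)=S(\rho)+S(\sigma)$, by expanding $\log(\lambda_i\mu_j)=\log\lambda_i+\log\mu_j$ and using that each spectrum sums to one. Applying this to both $S(AA'BB')$ and $S(BB')$ gives $H(AA'\mid BB')=H(A\mid B)_\rho+H(A'\mid B')_\sigma$.

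Third, \emph{monotonicity}. The inequality $H(A\mid B)\ge H(A\mid BC)$ unfolds to
\begin{equation*}
S(AB)-S(B)\ \ge\ S(ABC)-S(BC),
\end{equation*}
which is exactly strong subadditivity, $S(ABC)+S(B)\le S(AB)+S(BC)$. I would simply cite this. It is the main obstacle, since a self-contained proof requires operator-convexity machinery such as Lieb's concavity theorem or monotonicity of relative entropy, which is well beyond the scope here.

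Finally, the \emph{classical-conditioning formula}. Let $\rho_{XAB}=\sum_xp_x\ket x\bra x\otimes\rho^x_{AB}$. This operator is block diagonal, so its spectrum is $\{p_x\lambda\}$, where $\lambda$ ranges over the eigenvalues of $\rho^x_{AB}$. Computing the entropy blockwise gives
\begin{equation*}
S(XAB)=H(p)+\sum_xp_xS(AB)_{\rho^x},
\end{equation*}
with $H(p)$ the Shannon entropy of $p$. The same computation applied to the reduced state on $XB$ gives
\begin{equation*}
S(XB)=H(p)+\sum_xp_xS(B)_{\rho^x}.
\end{equation*}
Subtracting, the $H(p)$ terms cancel and leave $H(A\mid BX)=\sum_xp_xH(A\mid B)_{\rho^x}$. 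This step is routine once the block-diagonal spectrum is noted.
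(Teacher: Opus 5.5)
Your proposal is correct and takes essentially the paper's approach. The paper states this as a standard background fact without proof and names strong subadditivity as its only nontrivial ingredient, which is exactly where you invoke it; your telescoping, tensor-product-spectrum and block-diagonal-spectrum computations are the standard derivations of the other clauses. Your remark that the chain rule is a pure telescoping identity needing no positivity matches how the paper uses it in Lemma~\ref{lem:entropy-sum}, where the summands come from different witness states and individual $h_j$ may be negative.
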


\begin{fact}[Purifying an efficient preparation]\label{fact:purification}
A QPT preparation of $\rho_R$ can be made into a coherent preparation
of a purification $\ket\psi_{RE}$ by retaining its randomness,
measurement records, and workspace.
\end{fact}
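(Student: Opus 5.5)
The plan is to treat Fact~\ref{fact:purification} as a standard dilation argument. Model the QPT preparation of $\rho_R$ as a uniform circuit $C_\lambda$ consisting of three kinds of elementary operations: unitary gates, preparation of fresh ancillas (including uniformly random bits), and intermediate measurements followed by classically controlled gates and discarding of registers. I will replace each non-unitary operation by a unitary on an enlarged system, collect all the added qubits in a register $E$, and then show that the joint final state is pure and has the required marginal.

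The replacements are as follows.
\begin{itemize}
\item A random bit sampled uniformly is replaced by preparing $\ket0$ in a fresh qubit pair and applying a Hadamard and a CNOT. This produces $(\ket{00}+\ket{11})/\sqrt2$, and one half of the pair is kept in $E$ as the ``randomness record''.
\item Each measurement in the computational basis on a qubit $w$ is replaced by a CNOT from $w$ onto a fresh $\ket0$ ancilla that is placed in $E$. Classically controlled operations then become quantum-controlled operations conditioned on $w$ (or on its copy).
\item Every register that the original circuit discards (workspace, garbage) is simply kept in $E$ instead of being traced out.
\end{itemize}
The resulting circuit $U_\lambda$ is unitary, starts from $\ket{0\cdots0}$, and has size at most a constant factor larger than $C_\lambda$. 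The transformation is a syntactic, gate-by-gate rewriting, so a classical polynomial-time machine can compute $U_\lambda$ from the description of $C_\lambda$, and uniformity is preserved.

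For correctness, I will use the principle of deferred measurement together with the fact that tracing out a CNOT-copy register has the same effect on the remaining systems as a dephasing measurement. Inductively over the gates, I will show that the reduced state on the original circuit's live registers after each step of $U_\lambda$ equals the (averaged over outcomes) state of $C_\lambda$ after the corresponding step. Tracing out everything except $R$ at the end then gives $\Tr_E \ket\psi\bra\psi_{RE}=\rho_R$.

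The output $\ket\psi_{RE}=U_\lambda\ket{0\cdots0}$ is pure because it is a unitary applied to a pure state. The only point needing care is that classically controlled gates commute appropriately with the deferred copies. This is the standard deferred-measurement lemma, and it holds because the control is diagonal in the computational basis, which is also the basis in which the copy is made. I therefore expect no real obstacle; the step requiring the most bookkeeping is verifying that the induction holds when later gates depend on earlier measurement outcomes.
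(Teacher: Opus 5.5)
Your proposal is correct and matches the construction the statement describes; the paper records this as a standard fact without proof. Your gate-by-gate dilation (EPR halves kept in $E$ for coin flips, CNOT copies into $E$ for measurement records with classically controlled gates made quantum-controlled, and discarded workspace kept in $E$) yields a pure state $U_\lambda\ket{0\cdots0}$, and the deferred-measurement induction gives $\Tr_E\ket\psi\bra\psi_{RE}=\rho_R$ exactly, which is the exactness the paper later relies on for $S(\tau)=1$ and $\tau^{[f]}=\alpha$.
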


\begin{fact}[Quantum one-time pad]\label{fact:qotp}
For a $d$-qubit register $D$, put
$P_{u,v}=\bigotimes_{j=1}^dX^{u_j}Z^{v_j}$.
Every state $Y_{AD}$ satisfies
\begin{equation*}
 \frac1{4^d}\sum_{u,v\in\{0,1\}^d}
 (I_A\otimes P_{u,v})Y(I_A\otimes P_{u,v}^\dagger)
 =\Tr_D(Y)\otimes\MM{D}.
\end{equation*}
\end{fact}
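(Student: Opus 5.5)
We prove the quantum one-time-pad identity of Fact~\ref{fact:qotp} by linearity and reduction to a single qubit. Both sides are linear in $Y$, and the operators $Z_A\otimes\ket{x}\bra{x'}_D$ span all operators on $A\otimes D$. So it suffices to check the identity for $Y=Z_A\otimes W_D$ with $Z_A$, $W_D$ arbitrary. There the left side equals $Z_A\otimes\frac1{4^d}\sum_{u,v}P_{u,v}W P_{u,v}^\dagger$, and the right side equals $\Tr(W)\,Z_A\otimes\MM{D}$. The claim thus reduces to the twirl identity
\begin{equation*}
 \frac1{4^d}\sum_{u,v\in\{0,1\}^d}P_{u,v}WP_{u,v}^\dagger=\Tr(W)\,\frac{I_D}{2^d}
\end{equation*}
for every operator $W$ on $D$.

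Next we reduce to $d=1$. Because $P_{u,v}=\bigotimes_jX^{u_j}Z^{v_j}$ and the sum factorizes over coordinates, the averaging map is the tensor product $\mathcal T^{\otimes d}$ of the single-qubit map $\mathcal T(w)=\frac14\sum_{a,b\in\{0,1\}}X^aZ^bwZ^bX^a$. By linearity it is enough to evaluate $\mathcal T$ on product operators $W=w_1\otimes\cdots\otimes w_d$, since these span all operators. If the single-qubit claim $\mathcal T(w)=\Tr(w)I/2$ holds, then
\begin{equation*}
 \mathcal T^{\otimes d}(w_1\otimes\cdots\otimes w_d)=\prod_j\Tr(w_j)\,\frac{I}{2^d}=\Tr(W)\,\frac{I}{2^d}.
\end{equation*}

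For the single-qubit claim we use the Pauli basis $\{I,X,Y,Z\}$ and compute $\mathcal T$ on each basis element. The identity is fixed, so $\mathcal T(I)=I$. Each nonidentity Pauli $\sigma$ commutes with exactly two of the four conjugating operators $\{I,X,Z,XZ\}$ and anticommutes with the other two. The signs therefore cancel, giving $\mathcal T(\sigma)=0$. Writing $w=\frac12\bigl(\Tr(w)I+\sum_\sigma\Tr(\sigma w)\sigma\bigr)$ then yields $\mathcal T(w)=\Tr(w)I/2$.

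There is no real obstacle here. The only care needed is to check that the commutation and anticommutation counts are exactly two and two; for example, $Y$ commutes with $I$ and $XZ$ and anticommutes with $X$ and $Z$. We also need that the linearity reduction is legitimate for arbitrary, non-Hermitian operators, which is fine because all maps involved are linear.
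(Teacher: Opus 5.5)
Your proof is correct and takes the same approach as the paper. Both expand the $D$ component in the Pauli basis and observe that the twirl annihilates every nonidentity Pauli while fixing the identity; your reduction to a single qubit through the tensor-product factorization simply spells out the paper's one-line argument (one small point: naming the arbitrary operator on $A$ as $Z_A$ clashes with the Pauli $Z$).
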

\begin{proof}
Expand the $D$ component in the Pauli basis. Averaging conjugation
annihilates every nonidentity Pauli and preserves the identity component.
\end{proof}

\subsection{Entropy and extraction tools}

We use the smooth conditional min-entropy convention
of~\cite{tomamichel}, with purified-distance smoothing over
subnormalized states. Namely,
\begin{equation*}
 H_{\min}(A\mid B)_\zeta
 =\sup_{\sigma_B\geq0:\Tr\sigma_B=1}
   \sup\{k:\zeta_{AB}\leq2^{-k}I_A\otimes\sigma_B\},
\end{equation*}
and $H_{\min}^{\delta}$ optimizes over the purified-distance
$\delta$-ball.

\begin{fact}[Conditional AEP~\cite{tomamichel}]\label{fact:aep}
Let $A$ be one qubit and set
$g(\delta)=-\log(1-\sqrt{1-\delta^2})$.
If $m\geq(8/5)g(\delta)$, then
\begin{equation*}
 H_{\min}^{\delta}(A^m\mid B^m)_{\zeta^{\otimes m}}
 \geq mH(A\mid B)_\zeta-10\sqrt{m\,g(\delta)}.
\end{equation*}
The constant is independent of the dimension of $B$.
\end{fact}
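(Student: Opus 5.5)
The plan is to obtain the bound by specializing the fully quantum asymptotic equipartition property of Tomamichel to a one-qubit $A$, rather than proving an AEP from scratch. The approach is to recall the precise finite-size form of that theorem (Tomamichel's thesis, the corollary to the AEP for smooth min-entropy). For any $\zeta_{AB}$, $\delta\in(0,1)$ and $m\geq (8/5)g(\delta)$, with $g(\delta)=-\log(1-\sqrt{1-\delta^2})$, it gives
\begin{equation*}
 H_{\min}^{\delta}(A^m\mid B^m)_{\zeta^{\otimes m}}
 \geq mH(A\mid B)_\zeta-4\log\upsilon\,\sqrt{m\,g(\delta)},
\end{equation*}
where $\upsilon\leq \sqrt{2^{-H_{\min}(A\mid B)_\zeta}}+\sqrt{2^{H_{\max}(A\mid B)_\zeta}}+1$. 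The first step is to check that the smoothing convention there matches the one fixed above: purified distance, subnormalized states, and the supremum over normalized $\sigma_B$. I will also check that the hypothesis $m\geq(8/5)g(\delta)$ is exactly the one under which the cited corollary is stated.

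The second step is to bound $\upsilon$ using only $\dim A=2$, which is where dimension-independence in $B$ comes from. For the min-entropy, I will take $\sigma_B=\zeta_B$. Then $\zeta_{AB}\leq I_A\otimes\zeta_B$, since $\zeta_{AB}\leq d_A\,\MM{A}\otimes\zeta_B$ holds by the pinching/twirling bound, and this gives $H_{\min}(A\mid B)_\zeta\geq-\log d_A=-1$. For the max-entropy, I will use the duality $H_{\max}(A\mid B)_\zeta=-H_{\min}(A\mid C)_{\zeta}$ for a purification on $ABC$. Together with the same lower bound on the min-entropy, this gives $H_{\max}(A\mid B)_\zeta\leq\log d_A=1$. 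Hence $\upsilon\leq 2\sqrt2+1<3.83$, so $4\log\upsilon<4\cdot1.94<10$, and the stated inequality follows. Neither bound involves $\dim B$.

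The main obstacle is not mathematical but bookkeeping. Different sources state the AEP with slightly different constants, smoothing conventions (generalized trace distance versus purified distance), and conditions on $m$. I would first try to quote the thesis version verbatim and confirm that the $\upsilon$ expression and the threshold $(8/5)g(\delta)$ appear as written. If the available version instead has a constant such as $4\log(2+d_A^{3/2})$ or similar, I would redo the numerical check for $d_A=2$; for instance $4\log(2+2\sqrt2)\approx 9.1<10$. The generous constant $10$ is chosen precisely to absorb any such variant.
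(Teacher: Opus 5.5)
Your proposal is correct and matches what the paper relies on: the paper gives no argument beyond citing Tomamichel's AEP, and specializing it with $H_{\min}(A\mid B)_\zeta\ge -1$ and $H_{\max}(A\mid B)_\zeta\le 1$ for a qubit $A$ gives $4\log\upsilon\le 4\log(2\sqrt2+1)\approx 7.75<10$. One small slip: the twirling bound gives $\zeta_{AB}\le d_A^2\,\MM{A}\otimes\zeta_B=d_A\,I_A\otimes\zeta_B$, not $\zeta_{AB}\le I_A\otimes\zeta_B$, which fails for a maximally entangled state; the corrected inequality is exactly what yields the bound $H_{\min}(A\mid B)_\zeta\ge-\log d_A$ that you then use.
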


\begin{fact}[Conditional Clifford extraction~\cite{decoupling}]
\label{fact:extraction}
Let $A$ have $m$ qubits and write $A=A_1A_2$, with $\ell$ qubits
in $A_1$. If
\begin{equation*}
 H_{\min}^{\varepsilon/12}(A\mid B)_\zeta\geq k,
 \qquad \ell\leq(m+k)/2-\log(1/\varepsilon),
\end{equation*}
then a uniformly random Clifford $U$ satisfies
\begin{equation*}
 \mathbb E_U D\left(\Tr_{A_2}(U\zeta U^\dagger),
                    \MM{A_1}\otimes\zeta_B\right)\leq\varepsilon.
\end{equation*}
The same bound holds when the seed is included as a public classical
register.
\end{fact}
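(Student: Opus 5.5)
We prove the conditional Clifford extraction bound of Fact~\ref{fact:extraction} via the one-shot decoupling theorem for unitary $2$-designs, applied to a suitably reduced state.

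\textbf{Reduction to trace-norm decoupling.} Purify $\zeta_{AB}$ to $\zeta_{ABR'}$ and view discarding $A_2$ as a CPTP map $\mathcal T_{A\to A_1}=\Tr_{A_2}$. The quantity to bound is
\begin{equation*}
 \bigl\|\mathcal T(U\zeta_{AB}U^\dagger)-\MM{A_1}\otimes\zeta_B\bigr\|_1 .
\end{equation*}
Here $\MM{A_1}\otimes\zeta_B=\mathcal T(\MM{A})\otimes\zeta_B$. The decoupling theorem in its Clifford form (Clifford unitaries form a $2$-design) gives
\begin{equation*}
 \mathbb E_U\bigl\|\mathcal T(U\zeta U^\dagger)-\tau_{A_1}\otimes\zeta_B\bigr\|_1
 \leq 2^{-\frac12 H_{\min}(A\mid B)_\zeta-\frac12 H_{\min}(A\mid A_1')_{\tau}} .
\end{equation*}
Here $\tau$ is the Choi state of $\mathcal T$ and $\tau_{A_1}=\MM{A_1}$.

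\textbf{Choi-state entropy.} For a partial trace, the Choi state is a maximally entangled state on $A_1A_1'$ tensored with $\MM{A_2}$. So $H_{\min}(A\mid A_1')_\tau=-\ell+(m-\ell)=m-2\ell$. The exponent is then $-\tfrac12(k+m-2\ell)$. Under the hypothesis $\ell\le (m+k)/2-\log(1/\varepsilon)$, this is at most $-\log(1/\varepsilon)$, so the bound is at most $\varepsilon$.

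\textbf{Smoothing.} The hypothesis only gives smooth min-entropy, so we move to a $\varepsilon/12$-close (purified distance) subnormalized state $\tilde\zeta$ attaining it. Purified distance bounds trace distance, and applying $U$ followed by a partial trace is contractive. The errors $\|\zeta-\tilde\zeta\|_1$ and $\|\zeta_B-\tilde\zeta_B\|_1$ therefore each contribute $O(\varepsilon/12)$. The constant $12$ is chosen so that these smoothing errors, plus the decoupling term and the factor $\tfrac12$ from $D=\tfrac12\|\cdot\|_1$, total at most $\varepsilon$. This constant-tracking, together with the subnormalization correction in the decoupling theorem, is the main bookkeeping obstacle. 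I would take the smoothed decoupling statement of Dupuis et al.\ directly, which already packages the $12\varepsilon$ slack.

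\textbf{Public seed.} Including $U$ as a classical register $\sum_U p_U\ket U\bra U\otimes(\cdot)$ turns the averaged distance into the distance of the cq-states, by Fact~\ref{fact:trace-distance}. So the same bound holds with the seed included.
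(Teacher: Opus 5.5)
Your proposal is correct and follows the same route as the paper. Apply the one-shot decoupling theorem of Dupuis et al.\ to the partial-trace map $\Tr_{A_2}$, using that the Clifford group is an exact unitary $2$-design, and get the public-seed version from the cq-state identity in Fact~\ref{fact:trace-distance}. Your Choi-state computation $m-2\ell$ and your use of the smoothed decoupling bound with its $12\epsilon$ slack (trace norm at most $\varepsilon+\varepsilon$, halved for $D$) make explicit the bookkeeping that the paper leaves to the citation.
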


The Clifford group is an exact unitary two-design~\cite{design}, so the
claim follows from one-shot decoupling. The public-seed assertion follows
from Fact~\ref{fact:trace-distance}.

\begin{lemma}[Finite-precision Clifford sampling]\label{lem:clifford-sampling}
Let
\begin{equation*}
 K_t=|\mathcal C_t|=2^{t^2+2t}\prod_{j=1}^t(4^j-1),
 \qquad a=\lceil\log K_t\rceil+\lambda^2,
\end{equation*}
where Clifford operators are counted modulo global phase.
There is a polynomial-time computable map from $a$ bits to an exact
Clifford circuit whose distribution is within $2^{-\lambda^2}$ of
uniform in total variation distance. If uniform Clifford extraction
has error at most $\varepsilon$, publishing these raw seed bits gives
error at most $\varepsilon+2^{-\lambda^2}$ relative to an independent
uniform seed and the ideal extracted state.
\end{lemma}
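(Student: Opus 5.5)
The plan is to fix an explicit, polynomial-time computable indexing of the Clifford group $\mathcal C_t$ (modulo phase) by integers in $\{0,\ldots,K_t-1\}$, and then sample that index approximately uniformly from $a$ random bits by reduction modulo $K_t$.

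\textbf{Step 1 (indexing).} I would use the standard symplectic parametrization. A Clifford modulo phase is determined by a symplectic matrix in $\mathrm{Sp}(2t,\mathbb F_2)$ together with a Pauli correction in $\{0,1\}^{2t}$. This gives $|\mathcal C_t|=4^t\,|\mathrm{Sp}(2t,\mathbb F_2)|=2^{2t}\cdot 2^{t^2}\prod_{j=1}^t(4^j-1)=K_t$.

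For the symplectic group, I would use the inductive bijection of Koenig--Smolin. An index in $[0,|\mathrm{Sp}(2t)|)$ is decomposed in mixed radix, with digits in $[0,4^t-1)$ and $[0,2^{2t-1})$, to select a nonzero vector $v_1$ and a partner $w_1$ satisfying $\langle v_1,w_1\rangle=1$. The procedure then recurses on the symplectic complement, which has dimension $2t-2$. Each level runs in $\poly(t)$ time via symplectic Gram--Schmidt and transvections. The resulting symplectic matrix plus Pauli correction is then compiled into an exact circuit over $\{H,S,\mathrm{CNOT}\}$ with $O(t^2)$ gates, by Aaronson--Gottesman style canonical-form synthesis, which is standard polynomial-time Gaussian elimination over $\mathbb F_2$. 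Because the map is a bijection onto $\mathcal C_t$, the image of a uniform integer in $[0,K_t)$ is exactly the uniform Clifford. I expect this step to carry the real (if routine) content: I would cite the explicit algorithm rather than reprove it, and only check that the digit counts multiply to $K_t$.

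\textbf{Step 2 (sampling error).} Read the $a$ bits as an integer $X$ uniform on $[0,2^a)$ and output the index $X\bmod K_t$. Every residue class has either $\lfloor 2^a/K_t\rfloor$ or $\lceil 2^a/K_t\rceil$ preimages. Hence the total variation distance to uniform is at most $K_t/2^a\le 2^{-\lambda^2}$, because $2^a\ge K_t2^{\lambda^2}$.

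\textbf{Step 3 (publishing raw seeds).} Let $s$ denote the raw seed and $U(s)$ the Clifford it selects. Since $U(s)$ is a deterministic function of $s$, the real state $\sum_s2^{-a}\ket s\bra s\otimes\Tr_{A_2}(U(s)\zeta U(s)^\dagger)$ is a classical-quantum state. By Fact~\ref{fact:trace-distance}, its distance to $\sum_s2^{-a}\ket s\bra s\otimes\MM{A_1}\otimes\zeta_B$ equals $\mathbb E_sD(\cdot)$.

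This expectation is over the induced distribution $p$ on Cliffords, not the uniform distribution $u$. Each term is at most $1$, so replacing $p$ by $u$ changes the average by at most the total variation distance between $p$ and $u$, which is at most $2^{-\lambda^2}$ by Step 2. The expectation under $u$ is at most $\varepsilon$ by Fact~\ref{fact:extraction}. This gives the total bound $\varepsilon+2^{-\lambda^2}$.

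Finally, the ideal state in this comparison already contains an independent uniform $a$-bit seed register, which is exactly the required comparison target.
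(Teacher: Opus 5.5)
Your proposal is correct and follows the paper's proof. Like the paper, you reduce the $a$-bit seed modulo $K_t$, use the Koenig--Smolin indexing and synthesis, and bound the total variation distance by $K_t/2^a\le 2^{-\lambda^2}$; you then note that the trace distance with the raw seed is an average of conditional distances each at most one, so switching to the uniform Clifford distribution costs at most $2^{-\lambda^2}$, while the raw seed itself is exactly uniform. Your extra details, the order count $4^t|\mathrm{Sp}(2t,\mathbb F_2)|=K_t$ and the residue-class counting, are correct but do not change the argument.
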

\begin{proof}
Interpret the seed as an integer, reduce it modulo $K_t$, and use
the efficient Clifford indexing and synthesis of~\cite{clifford}.
The induced distribution has total variation distance at most
$K_t/2^a\leq2^{-\lambda^2}$. Trace distance with the raw seed included
is the average of the conditional trace distances, each at most one
(Fact~\ref{fact:trace-distance}). Changing the Clifford distribution
therefore adds at most $2^{-\lambda^2}$. The raw seed is exactly uniform.
\end{proof}

\section{A source with local prefix witnesses}\label{sec:prefix}

Let $(\rho_0,\rho_1)$ be an EFI pair.
The correlated-flag construction of~\cite{ccc} gives
\begin{equation*}
 \alpha_{RF}=\frac12\sum_{b=0}^1\rho_b\otimes\ket b\bra b_F,
 \qquad
 \beta_{RF}=\frac{\rho_0+\rho_1}{2}\otimes\frac{I_F}{2},
\end{equation*}
with
\begin{equation}
\label{eq:entropic-flag-gap}
 \alpha\capprox\beta,
 \qquad S(\beta)-S(\alpha)=:\chi\geq1-\eta(\lambda),
 \qquad\eta=\negl(\lambda).
\end{equation}
The idea is that a Helstrom measurement guesses the flag with error
$e=(1-D(\rho_0,\rho_1))/2$; data processing and Fano's inequality give
$\chi=I(F:R)_\alpha\geq1-h_2(e)$, where $h_2$ is binary entropy.
Thus we may take $\eta=h_2(e)=\negl(\lambda)$.
By Fact~\ref{fact:purification}, prepare purifications
$\ket{\psi_b}_{RE}$ of $\rho_b$. Define the pure
one-bit source
\begin{equation*}
 G(b)=\ket b_F\ket{\psi_b}_{RE}\ket0_{\mathrm{pad}}.
\end{equation*}
Order its qubits as $R,F,E$, followed by padding, and pad to a
power-of-two width $M\geq4$. Write the ordered qubits as
$Q_1,\ldots,Q_M$, with $Q_f=F$. Its average
\begin{equation*}
 \tau=\tfrac12\sum_b G(b)G(b)^\dagger
\end{equation*}
has entropy exactly one: its two pure summands have orthogonal flags.
Let $\tau^{[j]}$ be its first-$j$-qubit marginal and set
$\tau^{[0]}=1$. In particular, $\tau^{[f]}=\alpha$.

\begin{lemma}[Local witnesses and their entropy sum]
\label{lem:prefix-witnesses}\label{lem:entropy-sum}
Define
\begin{equation*}
 \gamma_j=\tau^{[j]}\quad(j\ne f),\qquad\gamma_f=\beta.
\end{equation*}
These states are efficiently preparable and satisfy
\begin{equation}
\label{eq:prefix-same}
 \gamma_j\capprox\tau^{[j]},\qquad
 \Tr_{Q_j}\gamma_j=\tau^{[j-1]}.
\end{equation}
For $h_j=H(Q_j\mid Q_{<j})_{\gamma_j}$,
\begin{equation*}
 \sum_{j=1}^M h_j=1+\chi\geq2-\eta(\lambda).
\end{equation*}
\end{lemma}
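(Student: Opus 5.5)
The proof splits into three parts: efficient preparability, the two relations in \eqref{eq:prefix-same}, and the entropy sum. For $j\neq f$ every claim is trivial. The state $\tau^{[j]}$ is prepared by sampling a uniform bit $b$, running the coherent preparation of $G(b)$ from Fact~\ref{fact:purification}, and discarding all qubits after $Q_j$. Hence $\gamma_j=\tau^{[j]}$ is efficiently preparable, $\gamma_j\capprox\tau^{[j]}$ holds with equality, and $\Tr_{Q_j}\gamma_j=\tau^{[j-1]}$ is just marginalization.

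For $j=f$ we have $\tau^{[f]}=\alpha$ and $\gamma_f=\beta$. The state $\beta$ is efficiently preparable: sample a uniform $b$, run $\mathsf{StateGen}(1^\lambda,b)$, and place an independent uniform bit in $F$. Indistinguishability $\beta\capprox\alpha$ is \eqref{eq:entropic-flag-gap}. For the marginal condition, note that $F$ is the last qubit of the prefix, so $Q_{<f}=R$. Tracing out $F$ from $\beta$ gives $(\rho_0+\rho_1)/2$, and tracing out $F$ from $\alpha$ gives the same state, which equals $\tau^{[f-1]}$. One bookkeeping point needs care: $R$ must occupy exactly $Q_1,\dots,Q_{f-1}$, so $f=|R|+1$, and this is fixed by the chosen ordering.

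For the entropy sum, write $h_j=S(\gamma_j)-S(\Tr_{Q_j}\gamma_j)=S(\gamma_j)-S(\tau^{[j-1]})$, using \eqref{eq:prefix-same}. Summing over $j$ gives
\begin{equation*}
 \sum_{j=1}^M h_j=\sum_{j=1}^M\bigl(S(\tau^{[j]})-S(\tau^{[j-1]})\bigr)+\bigl(S(\beta)-S(\alpha)\bigr),
\end{equation*}
since $\gamma_j$ differs from $\tau^{[j]}$ only at $j=f$, where we add and subtract $S(\tau^{[f]})=S(\alpha)$. The first sum telescopes to $S(\tau^{[M]})-S(\tau^{[0]})=S(\tau)-0$. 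The entropy of $\tau$ is exactly $1$: it is an equal mixture of two pure states, and these are orthogonal because their flags $\ket b_F$ differ. The second term is $\chi$, and \eqref{eq:entropic-flag-gap} gives $1+\chi\geq2-\eta$.

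No step is a real obstacle. The only point requiring care is confirming that the marginal identity at $j=f$ uses exactly the prefix $R$, so that the witnesses match the actual past. This is what makes the telescoping valid even though the $\gamma_j$ are not marginals of one global state.
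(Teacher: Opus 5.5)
Your proposal is correct and follows essentially the same route as the paper. Both arguments treat $j\neq f$ by equality with the actual prefix and use $\alpha_R=\beta_R$ for the flag marginal. Both then telescope the chain rule for $\tau$, with the single flag term shifted by $S(\beta)-S(\alpha)=\chi$, to get $S(\tau)+\chi=1+\chi$; you simply spell out the efficient preparations and the telescoping in more detail than the paper does.
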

\begin{proof}
All assertions except at $j=f$ follow from equality with the actual
prefix. At the flag, hiding follows from~\eqref{eq:entropic-flag-gap},
and the preceding marginal is unchanged because
$\alpha_R=\beta_R$. Thus replacing the flag term in the actual
entropy chain (Fact~\ref{fact:entropy}) changes it by $S(\beta)-S(\alpha)=\chi$. Hence
\begin{equation*}
 \sum_j h_j=S(\tau)+S(\beta)-S(\alpha)=1+\chi\geq2-\eta.
\end{equation*}
\end{proof}

The $\gamma_j$ are separate witnesses for separate prefixes, not
marginals of one common state. They agree with the \emph{actual}
previous-prefix marginal; individual $h_j$ may be negative.

\section{Random-start idea}\label{sec:equalization}

Use the fixed parameters
\begin{equation*}
 L=4M,\quad N=M(L-1),\quad
 t=2^{16}M^2(\lambda+4),\quad
 \ell=\frac t2+\frac{3t}{4M}.
\end{equation*}
They are polynomial-time computable integers, and $\ell\leq t$.
In one row, concatenate $L$ independently keyed copies of $G$.
Choose a public uniform $J\in\{0,\ldots,M-1\}$ using $\log M$ bits.
The main window is
\begin{equation*}
 J+1,\ldots,J+N,
\end{equation*}
with qubits relabeled $A_1,\ldots,A_N$. Its $M$ complementary qubits
are omitted from the prefix analysis and retained for masking by the
construction. Let $\omega_i$ denote the actual state on
$J,A_1,\ldots,A_i$, with $\omega_0=I_J/M$.

For fixed $i,J$, the current source copy and coordinate are
\begin{equation*}
 k=\left\lfloor\frac{J+i-1}{M}\right\rfloor+1,
 \qquad j=((J+i-1)\bmod M)+1.
\end{equation*}
Let $\Lambda_{i,J}$ be the actual preceding-copy factor: the first
copy's tail marginal and any complete intervening copies. It is the
trivial state when $k=1$. Define
\begin{equation*}
 \Theta_{i,J}=\begin{cases}
  \Tr_{Q_1\cdots Q_J}\gamma_j,&k=1,\\
  \gamma_j,&k\geq2,
 \end{cases}
\end{equation*}
and, after identifying source qubits with their window registers,
\begin{equation*}
 \Omega_i=\frac1M\sum_J\ket J\bra J\otimes
                              \Lambda_{i,J}\otimes\Theta_{i,J}.
\end{equation*}
Thus only the current copy is replaced; all earlier factors are
actual source marginals.

\begin{lemma}\label{lem:equalization}
For every $i\in[N]$, the efficiently preparable witness $\Omega_i$
satisfies
\begin{equation*}
 \Tr_{A_i}\Omega_i=\omega_{i-1},\qquad\Omega_i\capprox\omega_i,
\end{equation*}
and
\begin{equation*}
 H(A_i\mid J,A_{<i})_{\Omega_i}
 \geq\frac1M\sum_jh_j\geq\frac{2-\eta}{M}.
\end{equation*}
\end{lemma}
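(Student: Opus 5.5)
We argue conditionally on each value of the public classical register $J$, since $\omega_{i-1}$, $\omega_i$ and $\Omega_i$ are all block diagonal in $J$ with weight $1/M$. Fix $J$, and let $k,j$ be the current copy and coordinate as defined above. Because the $L$ source copies are independently keyed, the actual state on $A_1,\ldots,A_i$ given $J$ factors as $\Lambda_{i,J}\otimes\tau^{[j]}$ when $k\ge2$. When $k=1$ it is $\Tr_{Q_1\cdots Q_J}\tau^{[j]}$; here $\Lambda_{i,J}$ is trivial and the current copy is truncated to coordinates $J+1,\ldots,j$.

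\emph{Marginal identity.} Trace out $A_i=Q_j$ of the current copy and apply \eqref{eq:prefix-same}. This gives $\Tr_{Q_j}\gamma_j=\tau^{[j-1]}$, and partial traces over disjoint registers commute, so the same holds after the additional trace over $Q_1\cdots Q_J$ in the case $k=1$. There is one edge case: if $j=1$ and $k\ge2$, the current copy contributes only $Q_1$, and removing it leaves $\Lambda_{i,J}$. The previous prefix then ends with a complete earlier copy, which is exactly $\Lambda_{i,J}$ because $\tau^{[0]}=1$. Summing over $J$ gives $\Tr_{A_i}\Omega_i=\omega_{i-1}$.

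\emph{Indistinguishability.} $\Omega_i$ and $\omega_i$ differ only in that $\gamma_j$ replaces $\tau^{[j]}$, and this matters only when $j=f$. Define a QPT map $B$ acting on a challenge $\xi\in\{\gamma_f,\tau^{[f]}\}=\{\beta,\alpha\}$ as follows. It samples $J$ uniformly and computes $k,j$. If $j\ne f$, it discards $\xi$ and prepares the actual state directly. If $j=f$, it prepares $\Lambda_{i,J}$ from fresh copies of $G$, using the marginals of efficient preparations, and places $\xi$, partially traced over $Q_1\cdots Q_J$ when $k=1$, as the current copy. Then $B(\beta)=\Omega_i$ and $B(\alpha)=\omega_i$, so Fact~\ref{fact:datanegl} together with \eqref{eq:entropic-flag-gap} gives $\Omega_i\capprox\omega_i$. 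The same description shows that $\Omega_i$ is efficiently preparable, since $\beta$ is.

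\emph{Entropy bound.} By Fact~\ref{fact:entropy}, the classical $J$ gives
\[
H(A_i\mid J,A_{<i})_{\Omega_i}=\frac1M\sum_J H(A_i\mid A_{<i})_{\Lambda_{i,J}\otimes\Theta_{i,J}}.
\]
Additivity on the product with $\Lambda_{i,J}$ removes that factor, so each term equals $H(Q_j\mid Q_{<j}\text{ of the current copy})_{\Theta_{i,J}}$. For $k\ge2$ this is exactly $h_j$. For $k=1$ the conditioning system is only $Q_{J+1}\cdots Q_{j-1}$, a subset of $Q_{<j}$, so strong subadditivity gives a value at least $h_j$. The remaining step is to check that, for fixed $i$ (with $i\le N$), the map $J\mapsto j=((J+i-1)\bmod M)+1$ is a bijection of $\{0,\ldots,M-1\}$ onto $[M]$. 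Hence the average is at least $\frac1M\sum_j h_j$, and Lemma~\ref{lem:entropy-sum} gives the bound $(2-\eta)/M$.

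I expect the main obstacle to be bookkeeping rather than any deep step. One must check that the first-copy truncation in the case $k=1$ really yields a conditional entropy of the form $H(Q_j\mid\text{subset of }Q_{<j})_{\gamma_j}$. This needs the truncated current copy to be a marginal of $\gamma_j$ on the consecutive coordinates $J+1,\ldots,j$, which holds because $j\ge J+1$ when $k=1$. One must also handle the edge case $j=1$ consistently in both the marginal identity and the entropy computation.
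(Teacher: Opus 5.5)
Your proposal is correct and follows essentially the same route as the paper. The marginal identity comes from \eqref{eq:prefix-same}; hiding comes from embedding the $\alpha$/$\beta$ challenge only at the shift $J=(f-i)\bmod M$ and adjoining the independent preceding factors via Fact~\ref{fact:datanegl}; and the entropy bound comes from conditioning on $J$, dropping the product factor $\Lambda_{i,J}$, using strong subadditivity for the truncated first copy, and noting that $J\mapsto j$ is a permutation of $[M]$. Your explicit reduction map $B$ and your treatment of the $j=1$ boundary case spell out details the paper leaves implicit.
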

\begin{proof}
The exact marginal identity follows from~\eqref{eq:prefix-same}.
Only the shift $J=(f-i)\bmod M$ changes the actual prefix. There the
current copy contains the available part of $R$ and its flag, with
its purification $E$ absent, so~\eqref{eq:entropic-flag-gap} applies
after partial trace and adjoining independent preceding factors
(Fact~\ref{fact:datanegl}).
Conditioned on $J$, those factors do not affect conditional entropy.
The current-copy entropy is $h_j$ when its full past is present;
truncating the first copy removes conditioning systems and can only
increase it by strong subadditivity (Fact~\ref{fact:entropy}). Finally, $J\mapsto j$ is a
permutation of $[M]$, so averaging over the public $J$ gives the bound.
\end{proof}

Prepare $t$ independent rows, including independent public shifts.
Group their qubits in position $i$ into $C_i$, and put
$P_i=(J^t,C_1,\ldots,C_{i-1})$. Up to a register permutation, the
actual prefix and its witness are $\omega_i^{\otimes t}$ and
$\Omega_i^{\otimes t}$. Their indistinguishability follows from
Fact~\ref{fact:indistinguishability_multiple_samples}.

\begin{lemma}\label{lem:column-entropy}
For $\epsilon=2^{-\lambda}/12$ and all sufficiently large $\lambda$,
every column satisfies
\begin{equation*}
 H_{\min}^{\epsilon}(C_i\mid P_i)_{\Omega_i^{\otimes t}}
 \geq k:=\frac{7t}{4M},
\end{equation*}
and
\begin{equation*}
 \frac{t+k}{2}-\ell=\frac{t}{8M}\geq\lambda.
\end{equation*}
\end{lemma}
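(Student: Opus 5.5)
The plan is to apply the conditional AEP (Fact~\ref{fact:aep}) to the single-row witness $\Omega_i$, with $A=A_i$ (one qubit) and side information $B=(J,A_{<i})$, taking $m=t$ copies. Up to the register permutation noted above, $\Omega_i^{\otimes t}$ is exactly $\zeta^{\otimes t}$ for $\zeta=\Omega_i$. Under this identification $C_i=A_i^t$ and $P_i=(J,A_{<i})^t$. Lemma~\ref{lem:equalization} gives $H(A_i\mid J,A_{<i})_{\Omega_i}\geq(2-\eta)/M$. This first step is immediate once the register bookkeeping is written down.

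Next I will control the AEP penalty with $\delta=\epsilon=2^{-\lambda}/12$. Using $1-\sqrt{1-\delta^2}\geq\delta^2/2$, we get $g(\delta)\leq 2\log(1/\delta)+1=2\lambda+2\log 12+1\leq 2(\lambda+4)$. The hypothesis $t\geq(8/5)g(\delta)$ then holds trivially, since $t=2^{16}M^2(\lambda+4)$. The penalty satisfies
\begin{equation*}
 10\sqrt{t\,g(\delta)}\leq10\sqrt{2t(\lambda+4)}
 =10\sqrt{2t^2/(2^{16}M^2)}=\frac{10\sqrt2\,t}{2^8M}\leq\frac{t}{16M}.
\end{equation*}
Therefore
\begin{equation*}
 H_{\min}^{\epsilon}(C_i\mid P_i)\geq\frac{(2-\eta)t}{M}-\frac{t}{16M}.
\end{equation*}
Because $\eta=\negl(\lambda)$, we have $\eta\leq1/16$ for all sufficiently large $\lambda$. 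The right-hand side is then at least $(2-1/8)t/M=15t/(8M)\geq7t/(4M)=k$. This explains the qualifier ``sufficiently large'': it is exactly what is needed to absorb $\eta$.

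The second claim is arithmetic. We have $(t+k)/2=t/2+7t/(8M)$, so $(t+k)/2-\ell=7t/(8M)-6t/(8M)=t/(8M)$. Moreover $t/(8M)=2^{13}M(\lambda+4)\geq\lambda$.

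The only step needing care is the AEP application, in two respects. First, Fact~\ref{fact:aep} is stated for a one-qubit $A$ with arbitrary $B$, and its constant does not depend on $\dim B$; this matters because $B$ contains the whole growing prefix. Second, the conditional von Neumann entropy of the witness, not of the actual state, is what enters the bound. That is legitimate here, since the min-entropy claim concerns $\Omega_i^{\otimes t}$ itself, and the witness is a genuine i.i.d.\ state. I also need the $g$ bound to be uniform in $\lambda$, and the estimate above provides this.
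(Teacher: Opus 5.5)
Your proof is correct and is essentially the paper's. Like the paper, you apply the conditional AEP to $\Omega_i$ with side information $(J,A_{<i})$, bound $g(\epsilon)$ linearly in $\lambda+4$ so the penalty is a small multiple of $t/M$, absorb $\eta$ for large $\lambda$, and check the margin $t/(8M)=2^{13}M(\lambda+4)\geq\lambda$ by arithmetic.

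One numerical slip: $2\log 12+1\approx 8.17>8$, so $g(\epsilon)\leq 2(\lambda+4)$ is false as written. This does not affect the conclusion. Your penalty bound $t/(16M)$ only needs $g(\epsilon)\leq 2.56(\lambda+4)$, and the correct estimate $g(\epsilon)<2\lambda+9$ gives that. The paper instead uses $g\leq 3(\lambda+4)$ with the split $\eta\leq 1/8$ and penalty at most $t/(8M)$.
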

\begin{proof}
Apply Fact~\ref{fact:aep} to
$\Omega_i$, with side information $J,A_{<i}$. For
$g(\epsilon)=-\log(1-\sqrt{1-\epsilon^2})\leq3(\lambda+4)$,
our $t$ satisfies its hypothesis $t\geq(8/5)g(\epsilon)$ and gives
\begin{equation*}
 H_{\min}^{\epsilon}(C_i\mid P_i)
 \geq\frac{t(2-\eta)}M-10\sqrt{t\,g(\epsilon)}
 \geq\frac{15t}{8M}-\frac{t}{8M}=\frac{7t}{4M},
\end{equation*}
where $\eta\leq1/8$ eventually and
$10\sqrt{3t(\lambda+4)}\leq t/(8M)$. Substitution gives the margin.
\end{proof}

\section{Construction and security}

We first prove that the retained outputs are computationally
indistinguishable from maximally mixed. We then introduce the quantum one-time pad: we keep every
qubit traced out in the analysis and mask it with independent secret
Paulis. We then show that overall stretch is still positive, which gives us a $\PRS$.

\begin{theorem}\label{thm:uniform-generation}
EFI pairs exist if and only if expanding single-copy pseudorandom state
generators exist.
\end{theorem}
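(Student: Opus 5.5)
The reverse direction is the support-size argument from the overview. Given a $\PRS$ with $r$-bit keys and $n>r$ output qubits, take $\rho_0=\sigma_\lambda$ and $\rho_1=I/2^n$. Both are efficiently preparable, and single-copy pseudorandomness gives $\rho_0\capprox\rho_1$. The projector onto the support of $\rho_0$, which has rank at most $2^r$, accepts $\rho_0$ with probability one and $\rho_1$ with probability at most $2^{r-n}\le 1/2$. Hence $D(\rho_0,\rho_1)\ge 1/2$, which is an EFI pair by Definition~\ref{def:efi}.

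For the forward direction, amplify the EFI pair by Fact~\ref{fact:amplification} and form the source $G$ and the witnesses of Lemma~\ref{lem:prefix-witnesses} and Lemma~\ref{lem:equalization}.

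\textbf{Construction.} The key has four parts:
\begin{itemize}
\item $Lt$ source bits, one for each copy of $G$;
\item the shifts $J^t$;
\item for each column $i\in[N]$, an $a$-bit Clifford seed $s_i$ as in Lemma~\ref{lem:clifford-sampling};
\item Pauli keys for $d=N(t-\ell)+Mt$ masked qubits.
\end{itemize}
The seeds and shifts are the $q$ public bits.

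Generation runs as follows. Prepare the $t$ rows. Apply the Clifford $U_{s_i}$ to each column $C_i$, and split the result as $R_iT_i$ with $|R_i|=\ell$. Pauli-mask every $T_i$ together with the $Mt$ qubits outside the windows. Output the classical basis states $\ket{J^t,s}$, the registers $R_i$, and the masked registers. Every fixed-key output is pure, and the lengths satisfy $n-r=(M-3/2)t>0$ as computed in the overview. All of this is uniform and polynomial time.

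\textbf{Reduction of pseudorandomness.} By Fact~\ref{fact:qotp}, averaging over the Pauli keys makes the output equal to $\Xi\otimes\MM{D}$, where $\Xi$ is the state on $(J^t,s,R_1,\dots,R_N)$. It therefore suffices to show
\begin{equation*}
\Xi\capprox\MM{}.
\end{equation*}

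\textbf{Hybrids.} Let $\mathsf H_m$ be the state in which columns $1,\dots,m$ are actual and extracted, and $R_{m+1},\dots,R_N$ are replaced by independent maximally mixed registers. Then $\mathsf H_N=\Xi$, and $\mathsf H_0$ is uniform. Note that $J^t$ is uniform, and the raw seeds are uniform. To go from $\mathsf H_i$ to $\mathsf H_{i-1}$:
\begin{enumerate}
\item Observe that $\mathsf H_i$ is an efficient map, namely extraction of columns $<i$ and of column $i$, applied to $\omega_i^{\otimes t}$ tensored with fresh seeds and maximally mixed registers.
\item Replace $\omega_i^{\otimes t}$ by $\Omega_i^{\otimes t}$. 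This changes the output only negligibly by Fact~\ref{fact:indistinguishability_multiple_samples} and Fact~\ref{fact:datanegl}.
\item Apply Fact~\ref{fact:extraction} together with Lemma~\ref{lem:column-entropy}, using $\varepsilon=2^{-\lambda}$, whose margin is $(t+k)/2-\ell\ge\lambda$, and using Lemma~\ref{lem:clifford-sampling}. The seed is public and independent of everything else, so $R_i$ becomes within $2^{-\lambda}+2^{-\lambda^2}$ of $\MM{R_i}\otimes(\Omega_i^{\otimes t})_{P_i}\otimes s_i$.
\item Use $\Tr_{A_i}\Omega_i=\omega_{i-1}$ from Lemma~\ref{lem:equalization}: the remaining past is exactly $\omega_{i-1}^{\otimes t}$. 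Applying the earlier extraction maps, which by data processing (Fact~\ref{fact:data-processing}) cannot increase distance, yields exactly $\mathsf H_{i-1}$.
\end{enumerate}
Summing over the $N=\poly(\lambda)$ steps with a triangle inequality over distinguishing advantages gives negligible total advantage.

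\textbf{Main obstacle.} The delicate point is the ordering of the hybrids. The witness $\Omega_i$ is defined only for prefix $i$ and does not extend to later columns, so the hybrids must go from last to first. Each step must substitute the witness only while later columns are already ideal, and each step must land exactly back on the actual prefix $\omega_{i-1}^{\otimes t}$. I also need to check that the Clifford extraction guarantee holds conditioned on the entire past $P_i$, including $J^t$ and the earlier columns. This is needed so that the later application of the earlier extraction maps is valid. That conditional guarantee is exactly what Fact~\ref{fact:extraction} provides with $B=P_i$.
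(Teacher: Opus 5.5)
Your proposal is correct and follows the paper's proof essentially step for step. For the reverse direction you use the same support-size argument. For the forward direction you use the same purified source with local prefix witnesses, random-start windows, column-wise Clifford extraction analyzed by last-to-first hybrids $\mathcal H_i\capprox\mathcal W_i\sapprox\mathcal H_{i-1}$, and a quantum one-time pad on the discarded registers, giving stretch $(M-3/2)t$.

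The one point you state more loosely than the paper is the final summation of the $N$ computational hybrid gaps. As the paper notes, this should be closed by a non-uniform choice of the column and row, which yields a single distinguisher against EFI hiding.
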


We prove the forward implication below. The reverse implication is
known~\cite{my,ccc}: an expanding pure-state generator has an efficiently
generated average output of rank at most $2^r$, at trace distance at least
$1-2^{r-n}\geq1/2$ from the maximally mixed state. These states satisfy
Definition~\ref{def:efi}; Fact~\ref{fact:amplification} gives the amplified form.

\subsection{Security of the retained outputs}

Use the parameters and prefix states from Section~\ref{sec:equalization}. Applying the
seeded Clifford $U_i(s_i)$ to column $C_i$ produces registers $R_iD_i$,
where $R_i$ contains $\ell$ qubits. For the analysis, define
\begin{equation*}
 \Phi_{i,s_i}(X)
 =
 \Tr_{D_i}\!\left(U_i(s_i)XU_i(s_i)^\dagger\right).
\end{equation*}
The actual construction keeps $D_i$; the partial trace is used only
in the security argument.

Let $S=(s_1,\ldots,s_N)$ contain the independent raw binary Clifford
seeds, each of length $a$. Write
\begin{equation*}
 X_i=\omega_i^{\otimes t},
 \qquad
 Z_i=\Omega_i^{\otimes t},
\end{equation*}
with the row-to-column register permutation understood. Recall that
\begin{equation*}
 P_i=(J^t,C_1,\ldots,C_{i-1}),
 \qquad
 (Z_i)_{P_i}=X_{i-1}.
\end{equation*}
The witness therefore has exactly the actual preceding marginal.

For $i=0,\ldots,N$, define
\begin{equation}
\label{eq:hybrids}
 \mathcal H_i
 =
 2^{-Na}\sum_S \ket S\bra S\otimes
 \bigl(\Phi_{1,s_1}\otimes\cdots\otimes\Phi_{i,s_i}\bigr)(X_i)
 \otimes \MM{R_{i+1}\cdots R_N}.
\end{equation}
The shifts $J^t$ are included in $X_i$. All seeds are included even
when their column outputs have been replaced by maximally mixed
registers. Thus
\begin{equation*}
 \mathcal H_N=\text{actual retained-output ensemble},
 \qquad
 \mathcal H_0=\MM{S,J^t,R_1,\ldots,R_N}.
\end{equation*}
Let $\mathcal W_i$ be defined as in~\eqref{eq:hybrids}, with $X_i$
replaced by $Z_i$. Each transition has the form
\begin{equation*}
 \mathcal H_i\capprox\mathcal W_i\sapprox\mathcal H_{i-1},
\end{equation*}
where $\sapprox$ denotes negligible trace distance.
First, by Lemma~\ref{lem:equalization} and
Facts~\ref{fact:indistinguishability_multiple_samples} and~\ref{fact:datanegl},
\begin{equation*}
 \mathcal H_i\capprox\mathcal W_i.
\end{equation*}
Second, extract $C_i$ while retaining
its original past $P_i$. Lemma~\ref{lem:column-entropy},
Fact~\ref{fact:extraction}, and Lemma~\ref{lem:clifford-sampling} give
the following bound for the seeded extraction channel $\operatorname{Ext}_i$:
\begin{equation*}
 D\left(\operatorname{Ext}_i(Z_i),
         \MM{s_iR_i}\otimes(Z_i)_{P_i}\right)
 \leq e_\lambda:=2^{-\lambda}+2^{-\lambda^2}.
\end{equation*}
This includes the public raw seed $s_i$. Extraction makes $R_i$ uniform
and independent of the past, while the identity
\begin{equation*}
 (Z_i)_{P_i}=X_{i-1}
\end{equation*}
identifies that past with the actual preceding prefix. Applying the
earlier extraction maps and appending the uniform future outputs,
trace-distance contraction (Fact~\ref{fact:data-processing}) gives
\begin{equation}
\label{eq:hybrid-statistical}
 D(\mathcal W_i,\mathcal H_{i-1})\leq e_\lambda.
\end{equation}
For any adversary $\mathcal T$, the triangle inequality now gives
\begin{equation*}
 \adv_{\mathcal T}(\mathcal H_N,\mathcal H_0)
 \leq\sum_{i=1}^{N}\adv_{\mathcal T}(\mathcal H_i,\mathcal W_i)
       +Ne_\lambda.
\end{equation*}
The computational contribution is negligible by a standard non-uniform hybrid argument. Otherwise, one could select a column and a row with non-negligible distinguishing advantage, contradicting EFI hiding. Since the statistical contribution $Ne_\lambda$ is also negligible, we obtain $\mathcal H_N\capprox\mathcal H_0$.

\subsection{Pure outputs using a quantum one-time pad}

We now turn the retained-output analysis into a pure-state
generator. Define
\begin{equation*}
 A=(S,J^t,R_1,\ldots,R_N),
 \qquad
 D=D_1\cdots D_N B,
\end{equation*}
where $B$ contains every boundary qubit outside the
selected windows. The number of qubits in $D$ is
\begin{equation*}
 d=N(t-\ell)+Mt=MLt-N\ell.
\end{equation*}
Let $x=(S,J^t,b)$ collect the seeds, shifts, and $Lt$ source bits.
For fixed $x$, prepare all pure source copies, arrange the selected
windows into columns, apply the column Cliffords, and include
the public basis registers $\ket{S,J^t}$. Keep all qubits.
Denote the resulting pure state by
\begin{equation*}
 \ket{\Psi_x}_{AD}.
\end{equation*}
Its averaged retained marginal is exactly
\begin{equation*}
 \mathbb E_x\Tr_D\ket{\Psi_x}\bra{\Psi_x}
 =
 \mathcal H_N.
\end{equation*}
Choose an independent secret mask key
$z=(u,v)\in\{0,1\}^{2d}$ and define
\begin{equation*}
 P_z=\bigotimes_{j=1}^d X^{u_j}Z^{v_j}.
\end{equation*}
The complete key is $k=(x,z)$, and the generator outputs
\begin{equation*}
 \boxed{
 \ket{\mathsf{Gen}_\lambda(x,z)}
 =
 (I_A\otimes P_z)\ket{\Psi_x}.
 }
\end{equation*}
Every fixed-key output is pure because all preparation and masking
operations are unitary.

By Fact~\ref{fact:qotp}, averaging the mask replaces $D$ by an
independent maximally mixed state. Averaging first over the mask and then over $x$ gives
\begin{equation*}
 \mathbb E_{x,z}
 \ket{\mathsf{Gen}_\lambda(x,z)}
 \bra{\mathsf{Gen}_\lambda(x,z)}
 =
 \mathcal H_N\otimes\MM{D}.
\end{equation*}
Therefore,
\begin{equation*}
 \mathcal H_N\otimes\MM{D}
 \capprox
 \mathcal H_0\otimes\MM{D}
 =
 \MM{AD}.
\end{equation*}
Indeed, a full-output test can be simulated by a retained-output
test that prepares an independent $\MM{D}$.

\subsection{Key length and positive stretch}

The public seeds and shifts use
\begin{equation*}
 q=Na+t\log M
\end{equation*}
bits, contributing equally to key and output length.
There are $Lt$ source bits and $2d$ secret OTP bits. Hence
\begin{equation*}
 r=q+Lt+2d,
 \qquad
 n=q+MLt.
\end{equation*}
Using $d=N(t-\ell)+Mt$,
\begin{equation*}
 n-r=N(2\ell-t)-Mt-Lt.
\end{equation*}
Substituting the fixed parameters
\begin{equation*}
 N=M(L-1),\qquad
 L=4M,\qquad
 2\ell-t=\frac{3t}{2M},
\end{equation*}
we obtain
\begin{equation*}
 \boxed{
 n-r
 =
 \left(\frac L2-M-\frac32\right)t
 =
 \left(M-\frac32\right)t
 >0.
 }
\end{equation*}

\section*{Acknowledgements}
RB would like to thank Rahul Jain and Bo-Cheng Wu for helpful discussions at CQT, NUS.

\section*{AI disclosure}
The author proposed the purification and random-start idea to ChatGPT
Astra, which helped develop a rigorous proof of the lemma and assisted in
drafting the paper. The author verified and reorganized these arguments
to obtain the final results and retains full responsibility for the
contents of the paper.


\begin{thebibliography}{CCC$^+$26}

\bibitem[AQY22]{aqy}
Prabhanjan Ananth, Luowen Qian, and Henry Yuen.
\newblock Cryptography from pseudorandom quantum states.
\newblock In Yevgeniy Dodis and Thomas Shrimpton, editors,
\emph{Advances in Cryptology---CRYPTO 2022, Part I}, volume 13507
of Lecture Notes in Computer Science, pages 208--236.
Springer, Cham, 2022.
\newblock \url{https://doi.org/10.1007/978-3-031-15802-5_8}.

\bibitem[BJ24]{bj}
Rishabh Batra and Rahul Jain.
\newblock Commitments are equivalent to statistically-verifiable one-way
state generators.
\newblock In \emph{2024 IEEE 65th Annual Symposium on Foundations of
Computer Science (FOCS)}, pages 1178--1192.
IEEE Computer Society, 2024.
\newblock \url{https://doi.org/10.1109/FOCS61266.2024.00077}.
\newblock Full version: \url{https://arxiv.org/abs/2404.03220}.

\bibitem[BQSY24]{bqsy}
John Bostanci, Luowen Qian, Nicholas Spooner, and Henry Yuen.
\newblock An efficient quantum parallel repetition theorem and applications.
\newblock In \emph{Proceedings of the 56th Annual ACM Symposium on
Theory of Computing (STOC)}, pages 1478--1487. ACM, New York, 2024.
\newblock \url{https://doi.org/10.1145/3618260.3649603}.

\bibitem[BCQ23]{bcq}
Zvika Brakerski, Ran Canetti, and Luowen Qian.
\newblock On the computational hardness needed for quantum cryptography.
\newblock In Yael Tauman Kalai, editor,
\emph{14th Innovations in Theoretical Computer Science Conference
(ITCS 2023)}, volume 251 of Leibniz International Proceedings in
Informatics, pages 24:1--24:21.
Schloss Dagstuhl---Leibniz-Zentrum f{\"u}r Informatik, Dagstuhl, 2023.
\newblock \url{https://doi.org/10.4230/LIPIcs.ITCS.2023.24}.

\bibitem[CCC$^+$26]{ccc}
Bruno Cavalar, Boyang Chen, Andrea Coladangelo, Matthew Gray, Zihan Hu,
Zhengfeng Ji, and Xingjian Li.
\newblock A meta-complexity characterization of minimal quantum cryptography.
\newblock In \emph{Proceedings of the 58th Annual ACM Symposium on
Theory of Computing (STOC)}, pages 675--686. ACM, New York, 2026.
\newblock \url{https://doi.org/10.1145/3798129.3800783}.
\newblock Full version: \url{https://arxiv.org/abs/2510.07859} (2025).

\bibitem[DCEL09]{design}
Christoph Dankert, Richard Cleve, Joseph Emerson, and Etera Livine.
\newblock Exact and approximate unitary 2-designs and their application
to fidelity estimation.
\newblock \emph{Physical Review A}, 80(1):012304, 2009.
\newblock \url{https://doi.org/10.1103/PhysRevA.80.012304}.

\bibitem[DBWR14]{decoupling}
Fr{\'e}d{\'e}ric Dupuis, Mario Berta, J{\"u}rg Wullschleger,
and Renato Renner.
\newblock One-shot decoupling.
\newblock \emph{Communications in Mathematical Physics},
328(1):251--284, 2014.
\newblock \url{https://doi.org/10.1007/s00220-014-1990-4}.

\bibitem[G{\'a}c01]{gacs}
P{\'e}ter G{\'a}cs.
\newblock Quantum algorithmic entropy.
\newblock \emph{Journal of Physics A: Mathematical and General},
34(35):6859--6880, 2001.
\newblock \url{https://doi.org/10.1088/0305-4470/34/35/312}.

\bibitem[GMMY24]{gmmy}
Eli Goldin, Tomoyuki Morimae, Saachi Mutreja, and Takashi Yamakawa.
\newblock CountCrypt: Quantum cryptography between QCMA and PP.
\newblock arXiv:2410.14792, 2024; revised October 2025.
\newblock Cryptology ePrint Archive, Paper 2024/1707.
\newblock \url{https://doi.org/10.48550/arXiv.2410.14792};
\url{https://eprint.iacr.org/2024/1707}.

\bibitem[Gol90]{gol90}
Oded Goldreich.
\newblock A note on computational indistinguishability.
\newblock \emph{Information Processing Letters}, 34(6):277--281, 1990.
\newblock \url{https://doi.org/10.1016/0020-0190(90)90010-U}.

\bibitem[Gol01]{gol01}
Oded Goldreich.
\newblock \emph{Foundations of Cryptography, Volume 1: Basic Tools}.
\newblock Cambridge University Press, Cambridge, 2001.
\newblock \url{https://doi.org/10.1017/CBO9780511546891}.

\bibitem[HRV10]{hrv}
Iftach Haitner, Omer Reingold, and Salil Vadhan.
\newblock Efficiency improvements in constructing pseudorandom generators
from one-way functions.
\newblock In \emph{Proceedings of the 42nd ACM Symposium on Theory of
Computing (STOC)}, pages 437--446. ACM, New York, 2010.
\newblock \url{https://doi.org/10.1145/1806689.1806750}.
\newblock Journal version: \emph{SIAM Journal on Computing},
42(3):1405--1430, 2013,
\url{https://doi.org/10.1137/100814421}.

\bibitem[HILL99]{hill}
Johan H{\aa}stad, Russell Impagliazzo, Leonid A. Levin, and Michael Luby.
\newblock A pseudorandom generator from any one-way function.
\newblock \emph{SIAM Journal on Computing}, 28(4):1364--1396, 1999.
\newblock \url{https://doi.org/10.1137/S0097539793244708}.

\bibitem[HN23]{hn}
Shuichi Hirahara and Mikito Nanashima.
\newblock Learning in Pessiland via inductive inference.
\newblock In \emph{2023 IEEE 64th Annual Symposium on Foundations of
Computer Science (FOCS)}, pages 447--457.
IEEE Computer Society, 2023.
\newblock \url{https://doi.org/10.1109/FOCS57990.2023.00033}.

\bibitem[IRS22]{irs}
Rahul Ilango, Hanlin Ren, and Rahul Santhanam.
\newblock Robustness of average-case meta-complexity via pseudorandomness.
\newblock In \emph{Proceedings of the 54th Annual ACM SIGACT Symposium
on Theory of Computing (STOC)}, pages 1575--1583.
ACM, New York, 2022.
\newblock \url{https://doi.org/10.1145/3519935.3520051}.

\bibitem[IL89]{il}
Russell Impagliazzo and Michael Luby.
\newblock One-way functions are essential for complexity based cryptography.
\newblock In \emph{30th Annual Symposium on Foundations of Computer
Science (FOCS)}, pages 230--235. IEEE Computer Society, 1989.
\newblock \url{https://doi.org/10.1109/SFCS.1989.63483}.

\bibitem[JLS18]{jls}
Zhengfeng Ji, Yi-Kai Liu, and Fang Song.
\newblock Pseudorandom quantum states.
\newblock In Hovav Shacham and Alexandra Boldyreva, editors,
\emph{Advances in Cryptology---CRYPTO 2018, Part III}, volume 10993
of Lecture Notes in Computer Science, pages 126--152.
Springer, Cham, 2018.
\newblock \url{https://doi.org/10.1007/978-3-319-96878-0_5}.

\bibitem[KT24]{kt}
Dakshita Khurana and Kabir Tomer.
\newblock Commitments from quantum one-wayness.
\newblock In \emph{Proceedings of the 56th Annual ACM Symposium on
Theory of Computing (STOC)}, pages 968--978. ACM, New York, 2024.
\newblock \url{https://doi.org/10.1145/3618260.3649654}.
\newblock Full version: \url{https://arxiv.org/abs/2310.11526}.

\bibitem[KS14]{clifford}
Robert Koenig and John A. Smolin.
\newblock How to efficiently select an arbitrary Clifford group element.
\newblock \emph{Journal of Mathematical Physics}, 55(12):122202, 2014.
\newblock \url{https://doi.org/10.1063/1.4903507}.

\bibitem[Kre21]{kre}
William Kretschmer.
\newblock Quantum pseudorandomness and classical complexity.
\newblock In Min-Hsiu Hsieh, editor,
\emph{16th Conference on the Theory of Quantum Computation,
Communication and Cryptography (TQC 2021)}, volume 197 of
Leibniz International Proceedings in Informatics, pages 2:1--2:20.
Schloss Dagstuhl---Leibniz-Zentrum f{\"u}r Informatik, Dagstuhl, 2021.
\newblock \url{https://doi.org/10.4230/LIPIcs.TQC.2021.2}.

\bibitem[KQST23]{kqst}
William Kretschmer, Luowen Qian, Makrand Sinha, and Avishay Tal.
\newblock Quantum cryptography in Algorithmica.
\newblock In \emph{Proceedings of the 55th Annual ACM Symposium on
Theory of Computing (STOC)}, pages 1589--1602. ACM, New York, 2023.
\newblock \url{https://doi.org/10.1145/3564246.3585225}.

\bibitem[LP20]{lp}
Yanyi Liu and Rafael Pass.
\newblock On one-way functions and Kolmogorov complexity.
\newblock In \emph{2020 IEEE 61st Annual Symposium on Foundations of
Computer Science (FOCS)}, pages 1243--1254.
IEEE Computer Society, 2020.
\newblock \url{https://doi.org/10.1109/FOCS46700.2020.00118}.

\bibitem[MY22]{my}
Tomoyuki Morimae and Takashi Yamakawa.
\newblock Quantum commitments and signatures without one-way functions.
\newblock In Yevgeniy Dodis and Thomas Shrimpton, editors,
\emph{Advances in Cryptology---CRYPTO 2022, Part I}, volume 13507
of Lecture Notes in Computer Science, pages 269--295.
Springer, Cham, 2022.
\newblock \url{https://doi.org/10.1007/978-3-031-15802-5_10}.

\bibitem[Nao91]{naor}
Moni Naor.
\newblock Bit commitment using pseudorandomness.
\newblock \emph{Journal of Cryptology}, 4(2):151--158, 1991.
\newblock \url{https://doi.org/10.1007/BF00196774}.

\bibitem[Sat26]{microcrypt}
Or Sattath, maintainer.
\newblock Microcrypt Zoo.
\newblock Online reference graph of quantum cryptographic primitives.
Accessed October 13, 2026.
\newblock \url{https://sattath.github.io/microcrypt-zoo/}.

\bibitem[Tom12]{tomamichel}
Marco Tomamichel.
\newblock \emph{A Framework for Non-Asymptotic Quantum Information Theory}.
\newblock PhD thesis, ETH Zurich, 2012.
\newblock \url{https://doi.org/10.3929/ethz-a-7356080}.
\newblock Also available as \url{https://arxiv.org/abs/1203.2142}.

\end{thebibliography}
\end{document}